\newcommand\newmath[2]{\newcommand#1{\ensuremath{#2}\xspace}}
\newcommand\renewmath[2]{\renewcommand#1{\ensuremath{#2}\xspace}}
\newcommand\newmathope[2]{\newcommand#1{\ensuremath{\operatornamewithlimits{#2}}\xspace}}
\newmath{\N}{\mathbb{N}}
\newmath{\Z}{\mathbb{Z}}
\newmath{\Q}{\mathbb{Q}}
\newmath{\R}{\mathbb{R}}
\newmathope{\argmin}{\arg\min}
\newmathope{\argmax}{\arg\max}
\renewmath{\Pr}{\mathbb P}
\newmath{\PP}{\mathbf P}
\newmath{\Bad}{\mathsf{Bad}}
\newmath{\Opt}{\mathsf{Opt}}
\newmath{\Dist}{\mathcal D}
\newmath{\M}{\mathcal M}
\newmath{\Supp}{\mathsf{Supp}}
\newmath{\E}{\mathbb E}
\newmath{\UPre}{\mathsf{UPre}}
\newmath{\Good}{\mathsf{Good}}
\newmath{\Ugly}{\mathsf{Ugly}}
\newmath{\Reward}{\mathsf{Rew}}
\newmath{\AReward}{\mathsf{ARew}}
\newmath{\FPaths}{\mathsf{Paths}}
\newmath{\GoodPaths}{\mathsf{GoodPaths}}
\newmath{\BadPaths}{\mathsf{BadPaths}}
\newmath{\first}{\mathsf{first}}
\newmath{\second}{\mathsf{second}}
\newmath{\last}{\mathsf{last}}
\newmath{\len}{\operatorname{len}}
\newmath{\States}{\mathsf{States}}
\newmath{\Val}{\mathsf{Val}}
\newmath{\Win}{\mathsf{Win}}
\newmath{\ValSafety}{\mathsf{ValSafety}}
\newmath{\ValReach}{\mathsf{ValReach}}
\newmath{\Shield}{\mathsf{Shield}}
\newmath{\opt}{\mathsf{opt}}
\newmath{\Cyl}{\mathsf{Cyl}}
\newmath{\GoodCyl}{\mathsf{GoodCyl}}
\newmath{\NN}{\mathsf{NN}}
\newmath{\reward}{\mathsf{reward}}
\newmath{\numsamples}{\mathsf{count}} %
\newmath{\children}{\mathsf{children}}
\newmath{\mctsvalue}{\mathsf{value}}
\newmath{\total}{\mathsf{total}}
\newmath{\I}{\mathcal{I}}
\newmath{\iter}{\mathsf{iter}}
\newmath{\A}{\mathscr A}
\newmath{\Apsi}{\A^{\psi}}
\newmath{\Aphi}{\A^{\varphi}}
\newmath{\cyl}{\mathsf{Cyl}}
\newmath{\invalpha}{\alpha^{\text{-}1}}
\newcommand{\MP}{\mathsf{MP}}
\begin{document}

\title{Bi-Objective Lexicographic Optimization in Markov Decision Processes with Related Objectives\thanks{G. A. P\'erez was supported by the iBOF ``DESCARTES'' and FWO ``SAILor'' projects. Debraj Chakraborty, Anirban Majumdar, Sayan Mukherjee and Jean-Fran\c{c}ois Raskin were supported by the EOS project \emph{Verifying Learning Artificial Intelligent Systems} (F.R.S.-FNRS and FWO). Debraj Chakraborty was also supported by MASH (MUNI/I/1757/2021) of Masaryk University.
}}
\titlerunning{Bi-Objective Lexicographic Optimization in MDPs}
\author{Damien Busatto-Gaston\inst{1}
\and
 {Debraj Chakraborty}\inst{2}
 \and
 Anirban Majumdar\inst{3}
 \and
 Sayan Mukherjee\inst{3}
 \and
 Guillermo A. P\'erez \inst{4}
 \and
 Jean-Fran\c{c}ois Raskin\inst{3}
 }
\authorrunning{Busatto-Gaston et al.}
\institute{Universit\'{e} Paris Est Créteil, LACL, F-94010, France
    \email{damien.busatto-gaston@u-pec.fr}
 \and
    Masaryk University, Brno, Czech Republic
    \email{chakraborty@fi.muni.cz}
	 \and
 	Universit\'{e} Libre de Bruxelles, Brussels, Belgium
  \email{\{anirban.majumdar,sayan.mukherjee,jean-francois.raskin\}@ulb.be}
  \and
	University of Antwerp -- Flanders Make, Antwerp, Belgium
    \email{guillermo.perez@uantwerpen.be}
}
\maketitle              %
\begin{abstract}
We consider lexicographic bi-objective problems on Markov Decision Processes (MDPs), where we optimize one objective while guaranteeing optimality of another. We propose a two-stage technique for solving such problems when the objectives are related (in a way that we formalize). We instantiate our technique for two natural pairs of objectives: minimizing the (conditional) expected number of steps to a target while guaranteeing the optimal probability of reaching it; and maximizing the (conditional) expected average reward while guaranteeing an optimal probability of staying safe (w.r.t. some safe set of states).
For the first combination of objectives, which covers the classical frozen lake environment from reinforcement learning, we also report on experiments performed using a prototype implementation of our algorithm and compare it with what can be obtained from state-of-the-art probabilistic model checkers solving optimal reachability.

	\keywords{Markov decision processes  \and Multi-objective \and Synthesis.}
\end{abstract}
\section{Introduction}
\label{sec:intro}

Probabilistic model-checkers, such as {\sc Storm}~\cite{STORM22} or {\sc Prism}~\cite{KNP11}, have been developed to solve the model-checking problem for logics like PCTL and models like Markov decision processes. These tools can be used to compute strategies (or schedulers) that maximize the probability of, for instance, reaching a set of states. As a concrete example, they can be used to solve the Frozen Lake problem shown in Figure~\ref{fig:frozenLake}, where a robot must navigate from an initial point to a target while avoiding holes in the ice.
\begin{figure}[t]
    \centering
    \includegraphics{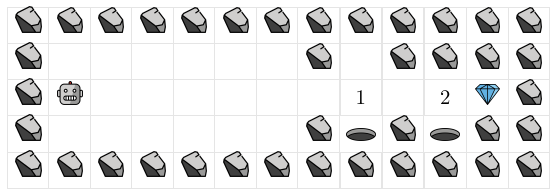}
    \caption{In the game of Frozen Lake, a robot moves in a slippery grid. It
      has to reach the target (the gem) while avoiding holes in the grid. The robot
      can no longer move once in a hole. %
      Part of the grid contains walls and the robot cannot move into them. The frozen
      surface of the lake being slippery,
when the robot tries to move by picking a cardinal direction, the next state
is determined stochastically over adjacent directions. For example, trying to move right would result on the robot going to the cell on the right with probability $0.8$ but going up or down with probability $0.1$ for each.  %
}\label{fig:frozenLake}
 \end{figure}
The ground is frozen  and so the movements of the robot are subject to
stochastic dynamics. While model-checkers provide optimal strategies for the
probability of reaching the target, those strategies may not be efficient in
terms of the expected number of steps required to reach it. For instance, the
strategy returned by \textsc{Storm} for the grid given in
Fig.~\ref{fig:frozenLake} requires on average $345$ steps to reach the target, while %
there are
other strategies that  %
are optimal for reachability
that can reach the target in just $34$ steps on average.
Indeed, a strategy can be optimal in terms of the
   probability to reach the target while (seemingly) behaving like a random walk on the grid (on portions without holes in particular).
 In the worst case, one could expect to reach the target after large number of steps (even on grids where there is a short and direct path to target), which can be considered useless for practical purposes.\footnote{
 In particular, the strategy could be used as a component of some larger approach dealing with a more challenging problem too difficult for exact methods. In these cases, such as~\cite{AAMAS23}, one frequently relies on machine-learning techniques
 (\textit{e.g.}~Monte-Carlo methods or reinforcement learning) that run simulations for a fixed number of steps.
 Thus, a strategy that takes needlessly too many steps to reach a target will not help with learning practical and relevant strategies.}
 Therefore, in this context,
 we aim to not only \emph{maximize the probability} of reaching the target, but also \emph{minimize the expected number of steps} required to reach it, %
 which is thus a multi-objective problem. Unfortunately,
 multi-objective optimization is not yet standard for
 probabilistic model checkers and most of them support it only for specific
 combinations of some objectives. %
 For instance, {\sc
 Storm} can solve the optimal reachability problem and compute the minimal
 expected cost to target, but only for target sets that can be reached with
 probability one. The latter is not usually the case in the Frozen Lake
 problem: the robot may need to walk next to a hole, and risk falling into
 it, in order to reach the target. In this paper, we demonstrate how to
 address the problems we have identified with the Frozen Lake example by
 leveraging the algorithms implemented in {\sc Storm}.

We identify a family of bi-objective optimization problems that can be solved
in two steps using readily available model-checking tools. This family of
problems is formalized as follows. Let $\mathcal{M}$ be an MDP and
$\Sigma(\mathcal{M})$ the set of all strategies for it. We study
reward functions that map strategies $\sigma \in
\Sigma(\mathcal{M})$ to real numbers via the induced MC $\mathcal{M}_\sigma$.
Concretely, let $f, g : \Sigma(\mathcal{M}) \to \mathbb{R}$. We say a strategy
$\sigma \in \Sigma(\mathcal{M})$ is $f$-optimal if $f(\sigma) = \sup_{\tau \in
\Sigma(\mathcal{M})} f(\tau)$ and write $\Sigma_f$ for the set of all
$f$-optimal strategies.

There are multiple ways in which one can approach the problem of finding
optimal strategies with respect to both $f$ and $g$ (see,
e.g.,~\cite{DBLP:conf/stacs/ChatterjeeMH06} and references therein). In this
work, we fix a lexicographic order on the functions. Formally, we want to
compute a strategy $\sigma$ such that the following holds:
\begin{equation}\label{eq:conds}
  \sigma \in \Sigma_f \text{ and } g(\sigma) = \sup_{\tau \in \Sigma_f} g(\tau)
\end{equation}

\paragraph*{Our contribution.}
In this paper, we discuss the problem
described above for two concrete cases of $f$ and $g$.
First,
we tackle the motivating example from Frozen Lake and detail how to find
strategies that maximize $f$, the probability of reaching a set of target
states, while minimizing the conditional expected number of steps to reach them, encoded as
$g$. It is not clear
how to obtain an exact finite representation of the set $\Sigma_f$ of all optimal
strategies for $f$.
To solve this problem, we first compute an over-approximation $\Sigma_f^{\sf over}$ of $\Sigma_f$.
We then prune the original MDP in such a way that the set of all strategies in the pruned MDP is exactly $\Sigma_f^{\sf over}$.
In this context, $\Sigma_f^{\sf over}$ will be the set of strategies that
only play actions used by at least one optimal strategy for reachability.
We then optimize a modified objective $g'$ in the pruned MDP, that, in turn, optimizes both $f$ and $g$ in the lexicographic order in the original MDP. The pruned MDP may contain actions from states that are part of some strategy maximizing the probability of reaching a target but which (taken together) do not make any progress towards the target (for example, a self-loop). These actions, however, are not part of the strategies that optimize $g'$ in the pruned MDP and hence they are also not part of the strategies that are returned by our algorithm.
Secondly, we also consider the problem of maximizing
the probability of remaining in a safe set of states, encoded as $f$, while
maximizing the expected mean-payoff along safe paths, encoded as $g$. Unlike the case for reachability, in this problem, we can in fact construct an exact finite representation of $\Sigma_f$ in the form of an MDP (Theorem~\ref{thm:safetyOpt1}), which we again construct by pruning the original one.
Similar to the reachability case, we then optimize a modified objective $g'$ in the pruned MDP. %
In both of these cases, we prove (in Theorems~\ref{thm:1} and~\ref{thm:3}) that the strategies optimizing $g'$ in the pruned MDP, are solutions to Eq.~\ref{eq:conds}.

Note that, the solution to the second problem
is related to the
\emph{shielding}~\cite{DBLP:conf/aaai/AlshiekhBEKNT18} framework and similar
works~\cite{DBLP:conf/aaai/Chatterjee0PRZ17,DBLP:conf/tacas/Junges0DTK16}, where one computes an exact
representation of the set of all optimal strategies for the first
objective and then solves for the second objective within that space. However, as remarked earlier, it is unclear
how to get an exact representation of $\Sigma_f$ in the first problem.

In both cases, our solution to these (lexicographic) bi-objective problems
can be implemented by using two calls to off-the-shelf tools like \textsc{Storm} or \textsc{PRISM}, thus
resulting in a
polynomial-time solution.
We
report on experimental results for the Frozen Lake example that validate the
need and practicality of our approach. Finally, we discuss
other instances of multi-objective problems where our approach naturally
generalizes.

\paragraph*{Related works.}
 The strategy synthesis problem in MDPs (or stochastic games, their $2.5$-players extension) can be defined %
for a wide variety of temporal objectives and quantitative rewards. %
Multi-objective problems are particularly challenging, as they need to optimize for multiple, potentially conflicting, goals.
\cite{Chatterjee2023} detailed a strategy synthesis algorithm for lexicographic combinations of $\omega$-regular objectives.
This problem has also been studied with model-free, reinforcement learning approaches~\cite{10.1007/978-3-030-90870-6_8}.
However, these approaches do not consider objectives that maximize quantitative rewards, and cannot optimize for properties such as the time to reach a target.
In \cite{10.1007/978-3-642-36742-7_12}, one can mix LTL objectives with mean-payoff rewards and in \cite{ijcai2022p476}
a lexicographic combination of discounted-sum rewards is considered.
Moreover, a discounted semantics of LTL\footnote{This allows one to express constraints on the number of steps needed to satisfy an Until operator.} is studied in \cite{10.1007/978-3-642-54862-8_37}, and can be used as a way to optimize for the time until a target is reached.
Combinations of LTL and total-reward objectives have been considered in works such as \cite{10.1007/978-3-642-19835-9_11} and \cite{10.1007/978-3-642-40196-1_28}, under assumptions that exclude our problem.
Indeed, while minimizing the time to reach a target can be encoded as optimizing the total-reward of a slightly modified structure (where costs are $1$ at every move before the target is reached then $0$ forever), these works are not directly applicable to our problem: applying \cite{10.1007/978-3-642-19835-9_11} requires the assumption that the optimal probability to reach a target is $1$ in order to minimize the expected time to target, and \cite{10.1007/978-3-642-40196-1_28} searches for a strategy on the Pareto frontier instead of optimizing for a lexicographic combination of objectives.

Note that minimizing the time to reach a target (a variant of the \emph{stochastic shortest path problem}~\cite{DBLP:journals/mor/BertsekasT91}) is only well-defined under the condition that the target is reached,
so that our example requires studying \emph{conditional} probabilities.
This notion has been studied in single-objective settings, so that for example
probabilistic model-checkers can optimize for the (conditional) probability of satisfying an $\omega$-regular event under the condition that another $\omega$-regular event holds~\cite{10.1007/978-3-642-54862-8_43}.
In particular, \cite{10.1007/978-3-662-54580-5_16} details how to maximize the expected total-reward until a target is reached, under the assumption that the target is indeed reached with positive probability.
This does not solve our motivating example however, as it may yield a strategy that is suboptimal for the probability of reaching the target.
Finally, we note that tools such as \cite{10.1007/978-3-642-19835-9_24} can handle settings similar to our second example (optimizing for safety and mean-payoff), but they do not consider conditional mean-payoff.

Overall, our general two-stage technique covers combinations of objectives that are subcases of problems previously studied (e.g. in~\cite{Chatterjee2023}) but it is also applicable to combinations not previously considered. Interestingly, and to the best of our knowledge, optimizing for a reachability objective while minimizing the conditional time to satisfy is not formally covered by previous work on multi-objective strategy synthesis, and is not an available feature of probabilistic model-checkers. It may be possible that this problem can be reduced to finding \emph{bias-optimal strategies}~\cite{dede287b-e396-3536-a645-5de6f58195ff} in a slightly modified MDP. However, this does not generalize to other objectives.

\section{Preliminaries}

A \emph{probability distribution} on a countable set $S$ is a function $d:S\to [0,1]$ such that $\sum_{s\in S}d(s)=1$.
We denote the set of all probability distributions on set $S$ by $\Dist(S)$. The support of a distribution $d\in \Dist(S)$ is $\Supp(d)=\{s\in S\mid d(s)>0\}$.

\subsection{Markov Chain}
\begin{definition}[Markov chain]\label{def:mc}
	A (discrete-time) Markov chain or an MC is a tuple $M=(S,P)$, where
	 $S$ is a countable set of states and
		$P$ is a mapping from $S$ to $\Dist(S)$.
\end{definition}
For states $s,s'\in S$, $P(s)(s')$ denotes the probability of moving from state $s$ to state $s'$ in a single transition and we denote this probability $P(s)(s')$ as $P(s,s')$.

For a Markov chain $M$,
a \emph{finite path} $\rho = s_0s_1\ldots s_i$ of length $i>0$ is
a sequence of $i+1$ consecutive states such that for all $t\in[0,i-1]$, $s_{t+1}\in \Supp(P(s_t))$.
We also consider states to be paths of length $0$.
Similarly, An \emph{infinite path} is an infinite sequence $\rho = s_0s_1s_2\ldots$ of states such that for all $t\in\N$, $s_{t+1}\in \Supp(P(s_t))$.
For a finite or infinite path $\rho=s_0s_1\ldots$, we denote its $(i+1)^{th}$ state by $\rho[i] = s_i$. We denote the last state of a finite path
$\rho = s_0s_1\ldots s_n$ by $\last(\rho) = s_n$.
Let $\rho=s_0s_1\ldots s_i$ and $\rho'=s'_0s'_1\ldots s'_j$ be two paths
such that $s_i=s'_0$.
Then, $\rho\cdot \rho'$
denotes the path $s_0s_1\ldots s_is'_1\ldots s'_j$.
For a finite or infinite path $\rho=s_0s_1\ldots$, we denote its \emph{$i$-length prefix} as $\rho_{|_i} = s_0s_1\ldots s_i$.

For a finite  path $\rho\in\FPaths_{M}$, we use $\FPaths^{\omega}_{M}(\rho)$ to denote the set of all paths $\rho' \in \FPaths^{\omega}_{M}$ such that there exists $\rho''\in\FPaths^{\omega}_M$ with $\rho'=\rho\cdot \rho''$. $\FPaths^{\omega}_{M}(\rho)$ is called the \emph{cylinder set} of $\rho$.

The $\sigma$-algebra associated with the MC $M$ %
is the smallest $\sigma$-algebra that contains the cylinder sets $\FPaths_{M}^{\omega}(\rho)$ for all $\rho\in \FPaths_{M}$. For a state $s$ in $S$, a measure is defined for the cylinder sets as --
\begin{align*}
	\Pr_{M,s}(\FPaths_{M}^{\omega}(s_0s_1\ldots s_i))=&\begin{cases}
		\prod_{t=0}^{i-1}P(s_t)(s_{t+1}) &\text{if } s_0 = s\\
		0 &\text{otherwise.}
	\end{cases}
\end{align*}

We also have $\Pr_{M,s}(\FPaths_{M}^{\omega}(s)) = 1$ and $\Pr_{M,s}(\FPaths_{M}^{\omega}(s')) = 0$ for $s'\neq s$.
This can be extended to a unique probability measure $\Pr_{M,s}$ on the aforementioned $\sigma$-algebra. In particular, if $\mathcal C\subseteq\FPaths_{M}$ is a set of finite paths forming pairwise disjoint cylinder sets, then
$\Pr_{M,s}(\cup_{\rho\in \mathcal C}\FPaths_{M}^{\omega}(\rho))=
\sum_{\rho\in \mathcal C}\Pr_{M,s}(\FPaths_{M}^{\omega}(\rho))$.
Moreover, if $\Pi\in\FPaths^{\omega}_{M}$ is the complement of a measurable set $\Pi'$, then $\Pr_{M,s}(\Pi)=1-\Pr_{M,s}(\Pi')$.

\subsection{Markov Decision Process}
\begin{definition}[Markov decision process]\label{def:mdp}
	A Markov decision process or an MDP is a tuple $\M=(S,A,P)$, where
	 $S$ is a finite set of states,
	 $A$ is a finite set of actions, and
	 $P$ is a (partial) mapping from $S\times A$ to $\Dist(S)$. %
\end{definition}
$P(s,a)(s')$ denotes the probability that action $a$ in state $s$ leads to state $s'$ and we denote this probability $P(s,a)(s')$ as $P(s,a,s')$.
Note that not all actions may be \emph{legal} from a state. Therefore, if an action $a$ is legal from a state $s$, we will have $\sum_{s'\in S} P(s,a,s')=1$. Otherwise, we will have $P(s,a,s')$ is undefined (denoted by $\bot$) for all $s'\in S$.

The definitions and notations used for paths in Markov chain can be extended in the case of MDPs. In an MDP, a \emph{path} is
a sequence of states and actions.

For an MDP $\M$, a (probabilistic) \emph{strategy} is a function $\sigma : \FPaths_{\M} \to \Dist(A)$
that maps a finite path $\rho$ to a probability distribution in $\Dist(A)$.
For a path $\rho \in \FPaths_\M$ and a strategy $\sigma$,
we will write $\sigma(\rho,a)$ in place of $\sigma(\rho)(a)$.
A strategy $\sigma$ is \emph{deterministic} if the support of
the probability distributions $\sigma(\rho)$ has size $1$.
A strategy $\sigma$ is \emph{memoryless} if $\sigma(\rho)$ depends only on $\last(\rho)$,
i.e. if $\sigma$ satisfies that for all $\rho,\rho'\in\FPaths_{\M}$, $\last(\rho)=\last(\rho')\Rightarrow\sigma(\rho)=\sigma(\rho')$.
We denote the set of all finite paths in $\M$ starting from $s$  following $\sigma$ by  $\FPaths_{\M}(s,\sigma)$.

An MDP $\M$ induced by a strategy $\sigma$ defines an MC $\M_{\sigma}$. Intuitively, this is obtained by unfolding $\M$ using the strategy $\sigma$ and using the probabilities in $\M$ to define the transition probabilities. Formally, $\M_{\sigma} = (\FPaths_{\M},P_{\sigma})$ where for all paths $\rho\in \FPaths_{\M}$, $P_{\sigma}(\rho)(\rho\cdot as) = \sigma(\rho)(a)\cdot P(\last(\rho),a)(s)$.
Thus, a state $\rho$ in $\FPaths_{\M}$ uniquely \emph{matches} a finite path $\rho'$ in $\M_{\sigma}$ where $\last(\rho') = \rho$. This way when a strategy $\sigma$ and a state $s$ is fixed, the probability measure $\Pr_{\M_{\sigma},s}$ defined in $\M_{\sigma}$ is also extended for paths in $\FPaths_\M$.
We write the expected value of a random variable $X$ with respect to the probability distribution $\Pr_{\M_{\sigma},s}$ as $\E_{\M_{\sigma},s}(X)$.
For the ease of notation, we write $\Pr_{\M_{\sigma},s}$ and $\E_{\M_{\sigma},s}$ as $\Pr_{\sigma,s}$ and $\E_{\sigma,s}$ respectively, if the MDP $\M$ is clear from the context. Also, we write $\FPaths^{\omega}_{\M_{\sigma}}(\rho)$ as $\Cyl_{\sigma}(\rho)$, if the MDP $\M$ is clear from the context.

In the sequel, we make use of (technical) lemmas that follow from the extensive literature on Markov chains and MDPs. However, for completeness, and to give the reader intuition regarding the presented objectives, we also give proofs for some of them.
\section{Length-Optimal Strategy for Reachability}
\label{sec:reachability-expected-length}

We begin by considering the multi-objective problem motivated by the game of frozen lake -- the robot tries to reach a target with as few steps as possible while not compromising on the probability of reaching a target.
More formally,
in this section, we find a strategy in an MDP that minimizes the expected number of steps to reach some goal states among those strategies that maximize the probability of reaching the goal states. %

We consider a set of target states $T \subseteq S$ in $\M$, and assume that every state in $T$ is a sink state, that is, it has only one outgoing action to itself with probability~$1$.
Given a path $\rho$ in an MC $M=(S,P)$, we use $\len_T(\rho)$ to denote the length of the shortest prefix of $\rho$ that reaches one of the states of $T$, that is, $\len_T(\rho) = i$ if $\rho[i]\in T$ and for all $j<i$, $\rho[j]\notin T$.

For an MDP $\M = (S,A,P)$,
let $\Pr_{\sigma,s}(\Diamond T)$ be the probability of reaching a state in $T$, starting from $s \in S$, following the strategy $\sigma$ in $\M$.
Then, let $\Val_\M(s) = \max_\sigma \Pr_{\sigma, s}(\Diamond T)$ be the maximum probability to reach $T$ from $s$, and $\Sigma_{\M,s}(\Diamond T) = \argmax_{\sigma}\Pr_{{\sigma},s}(\Diamond T)$ be the set of all optimal strategies.  %

\subsection*{Problem statement.}
Given an MDP $\M$, an initial state $s_0$ and a set of goal states $T$,
our objective is to find a strategy that minimizes %
$\E_{{\sigma},s_{0}}(\len_T \mid \Diamond T)$ among the strategies in $\Sigma_{\M,s_0}(\Diamond T)$, that is, the strategies which maximize $\Pr_{{\sigma},s_0}(\Diamond T)$.

For the rest of this section, we fix the MDP $\M = (S, A, P)$ and a set of target states $T \subseteq S$.
Note that, in this case, the functions $\sigma\mapsto \Pr_{{\sigma},s_0}(\Diamond T)$ and $\sigma\mapsto -\E_{{\sigma},s_0}(\len_T \mid \Diamond T)$ correspond to the two functions $f$ and $g$, respectively, and the set $\Sigma_{\M,s_0}(\Diamond T)$ corresponds to $\Sigma_f$, described in the introduction (Eq.~\ref{eq:conds}).

\subsection{Maximizing Probability to Reach a Target}

We denote the set $\{(s,a)\in S\times A\mid \Val_{\M}(s) = \sum_{s'}P(s,a,s')\cdot \Val_{\M}(s')\}$ by  $\Opt_{\M}$. %
  For $s\in S$, let $\Opt_{\M}(s)$ be the set $\{a \mid (s,a)\in \Opt_{\M}\}$.
Finally, we use $\Sigma^{\Opt}_{\M}$ to represent the set of strategies that takes actions according to $\Opt_{\M}$, that is,
$\Sigma^{\Opt}_{\M} = \{\sigma\mid \forall \rho, \forall a\in\Supp(\sigma(\rho));(\last(\rho),a)\in \Opt_{\M}\}\,.$

\begin{lemma}\label{lemma:case1}
	For every state $s \in S$ and for every $a \in A$,
	\[\Val_{\M}(s)\geq \sum_{s'}P(s,a,s')\cdot \Val_{\M}(s')\,.\]
\end{lemma}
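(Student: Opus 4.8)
The plan is to prove the inequality by a one-step unfolding (``first-step analysis'') argument: we build a single strategy from $s$ whose reachability probability is exactly the right-hand side, and then use that $\Val_{\M}(s)$ is the maximum of reachability probabilities over \emph{all} strategies. First I would dispose of the degenerate cases. If $a$ is not legal from $s$ then $P(s,a,\cdot)=\bot$ and the right-hand side is undefined (or, by convention, $0$), so there is nothing to prove; hence assume $a$ is legal from $s$, so that $\sum_{s'}P(s,a,s')=1$. If $s\in T$, then $s$ is a sink with a single self-loop action and $\Val_{\M}(s)=1=\Val_{\M}(s')$ for the unique successor $s'=s$, giving equality. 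So the interesting case is $s\notin T$ with $a$ legal.

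Next I would construct the witness strategy. For every $s'\in\Supp(P(s,a))$ fix an optimal strategy $\sigma_{s'}\in\Sigma_{\M,s'}(\Diamond T)$, i.e.\ with $\Pr_{\sigma_{s'},s'}(\Diamond T)=\Val_{\M}(s')$ (such strategies exist by the very definition of $\Val_{\M}$ as a maximum; alternatively one may take $\varepsilon$-optimal strategies and let $\varepsilon\to 0$ at the end, which avoids relying on attainment). Define a strategy $\tau$ from $s$ by: on the length-$0$ path $s$, play $a$ deterministically; on any path of the form $s\,a\,s'\,\rho'$ with $s'\in\Supp(P(s,a))$, play $\sigma_{s'}(s'\,\rho')$; on all other finite paths (unreachable from $s$ under $\tau$) let $\tau$ be arbitrary. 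Using the definition of the induced Markov chain $\M_{\tau}$ and the matching between finite paths of $\M$ and states of $\M_{\tau}$, the event $\Diamond T$ from $s$ decomposes along the first transition $s\to s'$ into pairwise disjoint cylinder sets $\Cyl_{\tau}(s\,a\,s')$; since $s\notin T$, inside each such cylinder the event $\Diamond T$ is, after dropping the prefix $s\,a$, exactly the event $\Diamond T$ from $s'$ in $\M_{\sigma_{s'}}$. Countable additivity over these disjoint cylinders then gives
\[
\Pr_{\tau,s}(\Diamond T)=\sum_{s'}P(s,a,s')\cdot\Pr_{\sigma_{s'},s'}(\Diamond T)=\sum_{s'}P(s,a,s')\cdot\Val_{\M}(s')\,.
\]
Finally, since $\Val_{\M}(s)=\max_{\sigma}\Pr_{\sigma,s}(\Diamond T)\ge\Pr_{\tau,s}(\Diamond T)$, the displayed equality yields $\Val_{\M}(s)\ge\sum_{s'}P(s,a,s')\cdot\Val_{\M}(s')$, which is the claim.

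The only genuinely delicate step is making the decomposition of $\Pr_{\tau,s}(\Diamond T)$ over the first move fully rigorous within the path-based formalism of the preliminaries: one must carefully identify, for each $s'\in\Supp(P(s,a))$, the sub-$\sigma$-algebra of $\M_{\tau}$ living ``below'' the path $s\,a\,s'$ with the $\sigma$-algebra of $\M_{\sigma_{s'}}$ from $s'$, check that reaching $T$ is shift-compatible there (this is where $s\notin T$ is used, together with the fact that $T$-states are sinks so reaching $T$ is a genuine prefix property), and then invoke countable additivity over the family of disjoint cylinders indexed by $\Supp(P(s,a))$. Everything else — the behaviour of $\tau$ on unreachable paths, the arithmetic of the finite sum — is routine bookkeeping.
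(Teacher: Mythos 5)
Your proof is correct and is essentially the paper's own argument: both hinge on the composite strategy that plays $a$ once from $s$ and then follows an optimal strategy from each successor $s'$, together with the first-step decomposition $\Pr_{\tau,s}(\Diamond T)=\sum_{s'}P(s,a,s')\cdot\Val_{\M}(s')$. The only difference is presentational --- the paper runs the same construction as a proof by contradiction while you argue directly from $\Val_{\M}(s)$ being the supremum, and you additionally spell out the degenerate cases ($a$ illegal, $s\in T$) and the measure-theoretic bookkeeping that the paper leaves implicit.
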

\begin{proof}
	Suppose, there is a state $s\in S$ and an action $a\in A$ such that $\Val_{\M}(s)< \sum_{s'}P(s,a,s')\cdot \Val_{\M}(s')$.
Now, consider the strategy $\sigma'$ that takes action $a$ from $s$ and then from paths $s\cdot as'$ follows a strategy $\sigma_{s'}\in \Sigma_{\M,s'}(\Diamond T)$ that maximizes the probability to reach states in $T$ from $s'$. Formally,
\begin{align*}
	\sigma'(\rho)=&\begin{cases}
		a &\text{if } \rho = s\\
		\sigma_{s'}(\rho') &\text{if } \rho = s\cdot as'\cdot \rho'%
	\end{cases}
\end{align*}
Then,
$\Pr_{{\sigma'},s}(\Diamond T) = \sum_{s'}P(s,a,s')\cdot \Pr_{{\sigma_{s'}},s'}(\Diamond T) = \sum_{s'}P(s,a,s')\cdot \Val_{\M}(s') > \Val_{\M}(s)\,,   $
which is a contradiction as $\Val_{\M}(s)\geq \Pr_{{\sigma},s}(\Diamond T)$ for any $\sigma$.
\end{proof}

\begin{lemma}\label{lem:localOptCoverReach}
	For every state $s \in S$, $\Sigma_{\M,s}(\Diamond T)\subseteq \Sigma^{\Opt}_{\M}$.
\end{lemma}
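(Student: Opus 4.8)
The plan is to show the contrapositive-flavored statement: if a strategy $\sigma$ ever plays an action outside $\Opt_\M$ along some reachable path, then it cannot be optimal for the reachability probability from $s$. Equivalently, I will argue that any strategy achieving $\Val_\M(s)$ from $s$ must, at every path $\rho$ that it visits with positive probability, place all its mass on actions $a$ with $(\last(\rho),a)\in\Opt_\M$. The key tool is the standard one-step (Bellman) characterization of the optimal reachability value together with Lemma~\ref{lemma:case1}, which gives the inequality $\Val_\M(s')\ge\sum_{s''}P(s',a,s'')\Val_\M(s'')$ for all $s',a$, with equality exactly on $\Opt_\M$.

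First I would recall the value-iteration / Bellman fact that for every state $s'$,
\[
\Val_\M(s') \;=\; \max_{a\in A}\ \sum_{s''} P(s',a,s'')\cdot \Val_\M(s'')
\]
(using that target states are sinks with value $1$ and that the max is attained since $A$ is finite). Next, for a fixed strategy $\sigma$ and fixed start state $s$, I would consider the function that assigns to each finite path $\rho$ with $\last(\rho)=s'$ the quantity $\Pr_{\sigma,s'\text{-shifted}}(\Diamond T)$ — i.e. the probability of eventually reaching $T$ under $\sigma$ conditioned on the prefix $\rho$ having occurred. Unrolling one step, this equals $\sum_{a}\sigma(\rho,a)\sum_{s''}P(s',a,s'')\cdot(\text{continuation value after }\rho\cdot as'')$. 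By Lemma~\ref{lemma:case1} each continuation value is at most $\Val_\M(\cdot)$ of the corresponding state, so the whole expression is bounded above by $\sum_a \sigma(\rho,a)\sum_{s''}P(s',a,s'')\Val_\M(s'')\le \Val_\M(s')$, again by Lemma~\ref{lemma:case1}. Thus along any path the conditional reach-probability under $\sigma$ is dominated by $\Val_\M$ of the current state.

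Now suppose $\sigma\in\Sigma_{\M,s}(\Diamond T)$, so $\Pr_{\sigma,s}(\Diamond T)=\Val_\M(s)$, and suppose for contradiction that there is a path $\rho$ with $\Pr_{\sigma,s}(\Cyl_\sigma(\rho))>0$, $\last(\rho)=s'$, and some $a\in\Supp(\sigma(\rho))$ with $(s',a)\notin\Opt_\M$, i.e. $\sum_{s''}P(s',a,s'')\Val_\M(s'')<\Val_\M(s')$ strictly. Then in the one-step unrolling above, the term for this $a$ is strictly less than $\sigma(\rho,a)\Val_\M(s')$, while every other term is $\le$ the corresponding bound; since $\sigma(\rho,a)>0$, the conditional reach-probability at $\rho$ under $\sigma$ is strictly less than $\Val_\M(s')$. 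Propagating this strict deficit back to the root — the reach-probability from $s$ is an average (over prefixes) that includes this strictly-deficient branch with positive weight $\Pr_{\sigma,s}(\Cyl_\sigma(\rho))>0$ — yields $\Pr_{\sigma,s}(\Diamond T)<\Val_\M(s)$, contradicting optimality. Hence every action in the support of $\sigma(\rho)$ lies in $\Opt_\M(\last(\rho))$ for every $\rho$ reachable under $\sigma$; redefining $\sigma$ arbitrarily on unreachable paths (which does not affect membership in $\Sigma_{\M,s}(\Diamond T)$) we get $\sigma\in\Sigma^{\Opt}_\M$, as required.

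The main obstacle is making the "propagating the strict deficit back to the root" step fully rigorous: the clean way is to define, for each $n$, the value $V_n(\rho)$ = probability under $\sigma$ of reaching $T$ within the suffix, conditioned on prefix $\rho$, and show by a careful (possibly transfinite or limiting) argument that $\Pr_{\sigma,s}(\Diamond T)=\E_{\sigma,s}\!\big[\Val_\M(\cdot)\big]$ only if the one-step inequality is tight almost everywhere; alternatively, one can phrase it directly via $\Pr_{\sigma,s}(\Diamond T)=\sum_{\rho}\Pr_{\sigma,s}(\Cyl_\sigma(\rho))\cdot(\text{first-step-at-}\rho\text{ contribution})$ and use that a single strictly-losing positive-probability branch forces a strict loss overall. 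Everything else — the Bellman identity, the monotone bound from Lemma~\ref{lemma:case1}, and the sink property of $T$ — is routine.
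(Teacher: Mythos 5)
Your argument is correct in substance, and its core local computation is the same as the paper's: unroll one step at the offending path $\rho$, bound every continuation value by $\Val_\M(\cdot)$, and use Lemma~\ref{lemma:case1} plus the strictness at the non-$\Opt_\M$ action to get $\Pr_{\sigma,\rho}(\Diamond T)<\Val_\M(\last(\rho))$. Where you diverge is in how that local deficit becomes a global contradiction. The paper does \emph{not} propagate the deficit against $\Val_\M(s)$ directly; it constructs an explicit improved strategy $\overline{\sigma}^*$ that agrees with $\sigma^*$ off the cylinder of $\rho$ and switches to an optimal continuation from $\last(\rho)$ inside it, then splits $\Pr_{\cdot,s}(\Diamond T)$ into the contribution through $\Cyl(\rho)$ and the contribution outside it. The outside parts coincide for the two strategies, the inside part is strictly larger for $\overline{\sigma}^*$, and optimality of $\sigma^*$ is contradicted --- no $n$-step unfolding of $\Val_\M$ is needed. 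Your route instead needs the supermartingale-type inequality $\sum_{|\rho'|=n}\Pr_{\sigma,s}(\Cyl_\sigma(\rho'))\cdot\Val_\M(\last(\rho'))\le\Val_\M(s)$, which is exactly the step you flag as the ``main obstacle''; it does go through by a straightforward induction on $n$ iterating Lemma~\ref{lemma:case1} (the paper carries out precisely this kind of unfolding explicitly in the safety setting, cf.\ Lemma~\ref{lem:safety1}), combined with the depth-$n$ law of total probability $\Pr_{\sigma,s}(\Diamond T)=\sum_{|\rho'|=n}\Pr_{\sigma,s}(\Cyl_\sigma(\rho'))\cdot\Pr_{\sigma,\rho'}(\Diamond T)$, which uses that $T$ is absorbing. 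So your proof is not broken, but the work you deferred is the one genuinely nontrivial piece of your route, whereas the paper's strategy-improvement construction sidesteps it at the cost of defining and analyzing a second strategy. One further point in your favor: your closing remark about paths of probability zero addresses a corner case (an optimal strategy prescribing a non-$\Opt_\M$ action on a $\sigma$-unreachable path) that the paper's proof silently glosses over, since its final strict inequality also requires $\Pr_{\sigma^*,s}(\Cyl_{\sigma^*}(\rho))>0$.
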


\begin{proof}
Towards a contradiction, suppose that, there is a strategy $\sigma^*\in \Sigma_{\M,s}(\Diamond T)$ such that $\sigma^* \notin \Sigma_\M^{\Opt}$. Then there exists a path $\rho$ and an action $a\in \Supp(\sigma^*(\rho))$ such that $(\last(\rho),a)\not\in \Opt_{\M}$. Let $\last(\rho) = t$.
Then, from Lemma~\ref{lemma:case1} and the fact that $(t,a)\not\in \Opt_{\M}$, we get:
\begin{equation}\label{eq:valReach1}
\Val_{\M}(t)> \sum_{s'}P(t,a,s')\cdot \Val_{\M}(s')\,,
\end{equation}
and for every other action $a'\neq a$,
\begin{equation}\label{eq:valReach2}
\Val_{\M}(t)\geq \sum_{s'}P(t,a',s')\cdot \Val_{\M}(s')\,.
\end{equation}
Consider the strategy $\overline{\sigma}^*$ that differs from $\sigma^*$ only on paths with $\rho$ as prefix: on every path having $\rho$ as a prefix, $\overline{\sigma}^*$ takes the next action according to a strategy $\sigma_{t}\in \Sigma_{\M,t}(\Diamond T)$ that maximizes the probability to reach a state in $T$ from $t$, whereas, it takes action according to $\sigma^*$ on every other path. Formally,
\begin{align*}
	\overline{\sigma}^*(\rho')=&\begin{cases}
		\sigma_{t}(\rho'') &\text{if } \rho' = \rho\cdot \rho''\\
		\sigma^*(\rho') &\text{otherwise.}
	\end{cases}
\end{align*}
Note that, for every strategy $\sigma$,
and for all $a' \in A$,
\(
\Pr_{\M_{\sigma^*},\rho\cdot a's'}(\Diamond T) \leq \Val_{\M}(s')\,.
\)
\small
\begin{align*}
\text{Therefore, \quad}
	\Pr_{{\sigma^*},\rho}(\Diamond T)
	&= \sum_{a'}\bigg(\sigma^*(\rho,a')
	\cdot \sum_{s'}\left(P(t, a',s')\cdot \Pr_{{\sigma^*},\rho\cdot a's'}(\Diamond T)\right)\bigg)\\
	&\leq \sum_{a'}\bigg(\sigma^*(\rho,a')
	\cdot \sum_{s'}\left(P(t, a',s')\cdot \Val_{\M}(s')\right)\bigg)\\
	&< \sum_{a'}\sigma^*(\rho,a')\cdot
	\Val_{\M}(t)\quad [\text{from Eq.~\ref{eq:valReach1} and \ref{eq:valReach2}}]\\
	&=\Val_{\M}(t)
\end{align*}
\normalsize
So, $\Pr_{{\overline{\sigma}^*},\rho}(\Diamond T) = \Pr_{{\sigma_{t}},t}(\Diamond T) = \Val_{\M}(t) > \Pr_{{\sigma^*},\rho}(\Diamond T)$.
For a finite path $\rho$ and an infinite path $\rho'$, we write $\rho\sqsubseteq \rho'$ if there exists an infinite path $\rho''$ such that $\rho'=\rho\cdot\rho''$.
Now note that, for every strategy $\sigma$,
\begin{align}\label{eq:prefixReach}
 \Pr_{{\sigma},s}(\Diamond T) &= \Pr_{{\sigma},s}(\rho'\models\Diamond T\wedge \rho\sqsubseteq \rho') + \Pr_{{\sigma},s}(\rho'\models\Diamond T\wedge \rho\not\sqsubseteq \rho')\nonumber\\
	&= \Pr_{{\sigma},s}(\Cyl_{\sigma}(p))\cdot \Pr_{{\sigma},\rho}(\Diamond T) + \Pr_{{\sigma},s}(\rho'\models\Diamond T\wedge \rho\not\sqsubseteq \rho')
\end{align}
Since for any $\rho'$ such that $\rho \not \sqsubseteq \rho'$,  $\sigma^*(\rho') = \overline{\sigma}^*(\rho')$, we have $\Pr_{{\sigma^*},s}(\Cyl_{\sigma}(\rho))$ is equal to $\Pr_{{\overline{\sigma}^*},s}(\Cyl_{\overline{\sigma}^*}(\rho))$,
and furthermore, $\Pr_{{\sigma^*},s}(\rho'\models\Diamond T\wedge \rho\not\sqsubseteq \rho')$ is equal to $\Pr_{{\overline{\sigma}^*},s}(\rho'\models\Diamond T\wedge \rho\not\sqsubseteq \rho')$.
Plugging this into Eq.~\ref{eq:prefixReach} for $\sigma^*$ and $\overline{\sigma}^*$, and the fact that $\Pr_{{\sigma^*},\rho}(\Diamond T) < \Pr_{{\overline{\sigma}^*},\rho}(\Diamond T)$, we conclude $\Pr_{{\sigma^*},s}(\Diamond T)<\Pr_{{\overline{\sigma}^*},s}(\Diamond T)$, which contradicts the fact that $\sigma^*$ is an optimal strategy.
\end{proof}

\subsection{Minimizing Expected Conditional Length to Target}
In the following, we propose a simple two-step pruning
algorithm to solve the multi-objective problem defined earlier in this section. Towards that direction, we first modify the given MDP $\M$ in the following manner.
\begin{definition}
\label{def:mprime-from-m}
We define the \emph{pruned MDP} $\M' = (S',A,P')$ with
$S' = \{s\in S\mid \Val_{\M}(s)> 0\}$ and $P'$ constructed from $P$ in the following way:
\begin{align*}
	P'(s,a,s')=&\begin{cases}
		P(s,a,s')\cdot\frac{\Val_{\M}(s')}{\Val_{\M}(s)} &\text{if } (s,a) \in \Opt_{\M}\text{ and }s, s'\in S'\\
		\bot &\text{otherwise.}
	\end{cases}
\end{align*}
\end{definition}
Note that $\M' = (S',A,P')$ is well-defined, since $P'$ is  a probability distribution. Indeed,
	$\sum_{s'}P'(s,a,s') = \sum_{s'}P(s,a,s')\cdot\frac{\Val_{\M}(s')}{\Val_{\M}(s)}
		= \frac{\Val_{\M}(s)}{\Val_{\M}(s)}
		= 1$.

From the construction of $\M'$, we get that the set $\Sigma(\M')$ of all strategies in $\M'$ is, in fact, $\Sigma^{\Opt}_{\M}$.
Following similar notation as introduced earlier, for a strategy $\sigma\in \Sigma(\M')$, we write $\Pr_{\M'_{\sigma},s}$  and $\E_{\M'_{\sigma},s}$ as $\Pr'_{\sigma,s}$ and $\E'_{\sigma,s}$, respectively. Also, we write $\FPaths^{\omega}_{\M'_{\sigma}}(\rho)$ as $\Cyl'_{\sigma}(\rho)$.

We now have all the ingredients to present the algorithm:

\begin{algorithm}[h]
	\caption{}\label{alg:distMin}
	\begin{algorithmic}
		\Require $\M = (S,A,P)$, $s_0 \in S$ and $T\subseteq S$.
	\end{algorithmic}
	\begin{algorithmic}[1]
		\State Create MDP $\M' = (S',A,P')$ according to Definition~\ref{def:mprime-from-m}.
		\State Find a strategy $\sigma^*$ that minimizes the expected length in $\M'$:
		$$ \textstyle \sigma^* \in \argmin_\sigma \E'_{\sigma,s_0}(\len_T)\,.$$\\
		\Return $\sigma^*$.
	\end{algorithmic}
\end{algorithm}

Note that, the strategies present in the pruned MDP $\M'$ contain every strategy of $\M$ that optimizes the probability of reaching a target (Lemma~\ref{lem:localOptCoverReach}).
To show that Algorithm~\ref{alg:distMin} indeed returns a length-optimal strategy maximizing the probability of reachability in $\M$,
we need to show the following:
\begin{itemize}
	\item the strategy given by Algorithm~\ref{alg:distMin} is indeed a strategy that optimizes the probability to reach a target, and
	\item for every strategy $\sigma\in \Sigma_{\M,s_0}(\Diamond T)$, the conditional expected length to a target state $\E_{\sigma,s_0}(\len_T\mid  \Diamond T)$ in $\M$ is the same as $\E'_{\sigma,s_0}(\len_T)$ in $\M'$. Therefore, it is enough to minimize the expected length
    in $\M'$.
\end{itemize}
 We first show a relation between the measures of cylinder sets in $\M$ and $\M'$.
\begin{lemma}\label{lemma:cylRelation}
For every strategy %
	$\sigma\in\Sigma^{\Opt}_{\M}$
    and for every path
	$\rho=s_0a_0s_1\ldots s_n\in {((S'\setminus T) \cdot A)^*T}\cap \FPaths_{\M'}(s_0,\sigma)$,
	$\Pr'_{\sigma,s_0}(\Cyl'_{\sigma}(\rho)) = \frac{\Pr_{\sigma,s_0}(\Cyl_{\sigma}(\rho))}{\Val_{\M}(s_0)}\,.$
\end{lemma}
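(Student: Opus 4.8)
The plan is to expand both cylinder measures as products of one-step probabilities along $\rho$, rewrite each $\M'$-transition using Definition~\ref{def:mprime-from-m}, and let the $\Val_{\M}$-factors telescope. Write $\rho = s_0 a_0 s_1 \cdots s_n$ with $s_0,\dots,s_{n-1}\in S'\setminus T$ and $s_n\in T$, and for $0\le t\le n$ let $\rho_{|_t}=s_0a_0\cdots s_t$ be the $t$-length prefix of $\rho$. Since $\sigma\in\Sigma^{\Opt}_{\M}=\Sigma(\M')$, both induced chains $\M_{\sigma}$ and $\M'_{\sigma}$ are well defined, and applying the construction of the chain induced by a strategy (to $\M$ and to $\M'$) together with the cylinder measure recalled in the preliminaries — evaluated on the path $(s_0)(\rho_{|_1})\cdots(\rho_{|_n})$ of the induced chain, whose states are the prefixes of $\rho$ — gives the product formulas
\begin{align*}
\Pr'_{\sigma,s_0}(\Cyl'_{\sigma}(\rho)) &= \prod_{t=0}^{n-1}\sigma(\rho_{|_t},a_t)\cdot P'(s_t,a_t,s_{t+1})\,,\\
\Pr_{\sigma,s_0}(\Cyl_{\sigma}(\rho)) &= \prod_{t=0}^{n-1}\sigma(\rho_{|_t},a_t)\cdot P(s_t,a_t,s_{t+1})\,.
\end{align*}

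Next I would substitute for each factor $P'(s_t,a_t,s_{t+1})$ with $0\le t\le n-1$. Because $\rho$ is a path of $\M'$ and each $s_t$ with $t\le n-1$ lies in $S'\setminus T\subseteq S'$, the edge $(s_t,a_t)$ is defined in $\M'$, hence $(s_t,a_t)\in\Opt_{\M}$, and Definition~\ref{def:mprime-from-m} gives $P'(s_t,a_t,s_{t+1})=P(s_t,a_t,s_{t+1})\cdot\Val_{\M}(s_{t+1})/\Val_{\M}(s_t)$; moreover every denominator is nonzero since $s_t\in S'$ means $\Val_{\M}(s_t)>0$. Plugging this into the first product and pulling out the $\Val_{\M}$-factors yields $\prod_{t=0}^{n-1}\Val_{\M}(s_{t+1})/\Val_{\M}(s_t)=\Val_{\M}(s_n)/\Val_{\M}(s_0)$ by telescoping, and $\Val_{\M}(s_n)=1$ because $s_n\in T$ is a sink state (so the probability of reaching $T$ from it is $1$). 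Hence $\Pr'_{\sigma,s_0}(\Cyl'_{\sigma}(\rho))=\frac{1}{\Val_{\M}(s_0)}\prod_{t=0}^{n-1}\sigma(\rho_{|_t},a_t)\cdot P(s_t,a_t,s_{t+1})=\Pr_{\sigma,s_0}(\Cyl_{\sigma}(\rho))/\Val_{\M}(s_0)$, which is the claim; the degenerate case $n=0$ (where $\rho=s_0\in T$) is covered by the empty-product convention, both sides being $1=1/\Val_{\M}(s_0)$.

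I do not expect a real obstacle here. The only points needing care are: (i) justifying the product form of the cylinder measure in the two induced chains directly from the preliminaries, with the indices of the strategy evaluations and the transitions lined up correctly; and (ii) being explicit that the membership condition $\rho\in((S'\setminus T)\cdot A)^*T\cap\FPaths_{\M'}(s_0,\sigma)$ is precisely what guarantees each $P'$-factor falls into the non-$\bot$ branch of Definition~\ref{def:mprime-from-m} and that every $\Val_{\M}(s_t)$ appearing in a denominator is strictly positive. Once those are pinned down, the proof is the one-line telescoping computation above.
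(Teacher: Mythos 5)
Your proof is correct and follows essentially the same route as the paper's: expand both cylinder measures as products of one-step probabilities, substitute $P'(s_i,a_i,s_{i+1})=P(s_i,a_i,s_{i+1})\cdot\Val_{\M}(s_{i+1})/\Val_{\M}(s_i)$ from Definition~\ref{def:mprime-from-m}, telescope, and use $\Val_{\M}(s_n)=1$ for $s_n\in T$. The extra care you take (non-$\bot$ branch, positive denominators, the $n=0$ case) is sound but the paper leaves these implicit.
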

\begin{proof}
As $s_n\in T$, $\Val_{\M}(s_n) = 1$. So,
     \small
	\begin{align*}
		\Pr'_{\sigma,s_0}(\Cyl'_{\sigma}(\rho))
		& = \prod_{i = 0}^{n-1} \sigma(\rho_{|i},a_i)\cdot P'(s_i,a_i,s_{i+1})\\
		& = \prod_{i = 0}^{n-1} \sigma(\rho_{|i},a_i)\cdot P(s_i,a_i,s_{i+1})\cdot \frac{\Val_{\M}(s_{i+1})}{\Val_{\M}(s_i)}\\
		& = \Pr_{\sigma,s_0}(\Cyl_{\sigma}(\rho))\cdot \frac{\Val_{\M}(s_n)}{\Val_{\M}(s_0)}
        = \frac{\Pr_{\sigma,s_0}(\Cyl_{\sigma}(\rho))}{\Val_{\M}(s_0)}\,.\qedhere
	\end{align*}
  \normalsize
\end{proof}

Using Lemma~\ref{lemma:cylRelation} we will prove that (cf. Corollary~\ref{corr:len}) every strategy that maximizes the probability of reaching a target state in $\M$, reaches a target state in $\M'$ with probability~$1$, and vice versa.

\begin{lemma}\label{lemma:reachabiltyMax}
 For every strategy $\sigma\in\Sigma^{\Opt}_{\M}$, %
	$\Pr'_{\sigma,s_0}(\Diamond T) = \frac{\Pr_{\sigma,s_0}(\Diamond T)}{\Val_{\M}(s_0)}$.
\end{lemma}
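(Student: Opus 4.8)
The plan is to write both $\Pr_{\sigma,s_0}(\Diamond T)$ and $\Pr'_{\sigma,s_0}(\Diamond T)$ as countable sums of cylinder probabilities ranging over the \emph{first-visit} paths to $T$, argue that the two index sets coincide modulo null paths, and then invoke Lemma~\ref{lemma:cylRelation} term by term. Throughout I use that $s_0\in S'$, i.e.\ $\Val_{\M}(s_0)>0$, so that the quotient in the statement is well defined.

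\emph{Step 1 (decomposition).} I would let $\mathcal C$ be the set of finite paths $\rho=s_0a_0\dots s_n$ from $s_0$ with $s_n\in T$ and $s_i\notin T$ for all $i<n$. Since $S$ and $A$ are finite, $\mathcal C$ is countable; no element of $\mathcal C$ is a prefix of another (a proper prefix of such a $\rho$ would place a state of $T$ strictly before $\last(\rho)$), so the cylinders $\{\Cyl_\sigma(\rho)\}_{\rho\in\mathcal C}$ are pairwise disjoint; and an infinite path satisfies $\Diamond T$ iff it has a prefix in $\mathcal C$ (cut it at its first visit to $T$). Hence $\Diamond T=\bigcup_{\rho\in\mathcal C}\Cyl_\sigma(\rho)$ is a disjoint union, and by countable additivity $\Pr_{\sigma,s_0}(\Diamond T)=\sum_{\rho\in\mathcal C}\Pr_{\sigma,s_0}(\Cyl_\sigma(\rho))$. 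The identical argument in $\M'$ (where $T$-states are still sinks) gives $\Pr'_{\sigma,s_0}(\Diamond T)=\sum_{\rho}\Pr'_{\sigma,s_0}(\Cyl'_\sigma(\rho))$, the sum ranging over the first-visit paths to $T$ in $\M'$, which all lie in $((S'\setminus T)\cdot A)^*T$ because $\M'$ has state space $S'$.

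\emph{Step 2 (matching the index sets).} I would set $\mathcal C^+=\{\rho\in\mathcal C:\Pr_{\sigma,s_0}(\Cyl_\sigma(\rho))>0\}$, so dropping null cylinders $\Pr_{\sigma,s_0}(\Diamond T)=\sum_{\rho\in\mathcal C^+}\Pr_{\sigma,s_0}(\Cyl_\sigma(\rho))$, and then show $\mathcal C^+$ is exactly the set of first-visit paths in $\M'$ with nonzero $\M'$-probability, each satisfying the hypotheses of Lemma~\ref{lemma:cylRelation}. If $\rho=s_0a_0\dots s_n\in\mathcal C^+$ then every $P(s_i,a_i,s_{i+1})>0$ and every $\sigma(\rho_{|i},a_i)>0$; the suffix $s_ia_i\dots s_n$ witnesses $\Val_{\M}(s_i)>0$, so $s_i\in S'$, and $a_i\in\Supp(\sigma(\rho_{|i}))$ with $\sigma\in\Sigma^{\Opt}_{\M}$ gives $(s_i,a_i)\in\Opt_{\M}$; hence $P'(s_i,a_i,s_{i+1})=P(s_i,a_i,s_{i+1})\cdot\Val_{\M}(s_{i+1})/\Val_{\M}(s_i)>0$, so $\rho\in((S'\setminus T)\cdot A)^*T\cap\FPaths_{\M'}(s_0,\sigma)$ and $\Pr'_{\sigma,s_0}(\Cyl'_\sigma(\rho))>0$. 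Conversely, if $\rho$ is a first-visit path in $\M'$ with $\Pr'_{\sigma,s_0}(\Cyl'_\sigma(\rho))>0$, then every $\sigma(\rho_{|i},a_i)>0$ and every $P'(s_i,a_i,s_{i+1})>0$, the latter forcing $P(s_i,a_i,s_{i+1})>0$, so $\Pr_{\sigma,s_0}(\Cyl_\sigma(\rho))>0$ and $\rho\in\mathcal C^+$.

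\emph{Step 3 (conclusion).} For each $\rho\in\mathcal C^+$, Lemma~\ref{lemma:cylRelation} applies and yields $\Pr'_{\sigma,s_0}(\Cyl'_\sigma(\rho))=\Pr_{\sigma,s_0}(\Cyl_\sigma(\rho))/\Val_{\M}(s_0)$; summing over $\rho\in\mathcal C^+$ and using the two decompositions gives $\Pr'_{\sigma,s_0}(\Diamond T)=\Pr_{\sigma,s_0}(\Diamond T)/\Val_{\M}(s_0)$. The only delicate point is Step 2: checking that ``reaching $T$ with positive probability'' selects the same finite paths in $\M$ and in $\M'$ and that these are exactly the paths covered by Lemma~\ref{lemma:cylRelation}; once that is pinned down, disjointness of cylinders, countable additivity and the final arithmetic are routine.
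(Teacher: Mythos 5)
Your proof is correct and follows essentially the same route as the paper's: decompose $\Diamond T$ into the disjoint cylinders of the first-visit paths to $T$, identify the index sets of such paths in $\M$ and $\M'$, and apply Lemma~\ref{lemma:cylRelation} term by term before summing. The only difference is that you justify the matching of the two index sets more carefully (restricting to positive-probability paths and checking membership in $\Opt_{\M}$ and $S'$ explicitly), where the paper simply asserts that the sets of first-visit paths coincide because only states that cannot reach $T$ are removed.
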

\begin{proof}
	Note that, $\FPaths_{\M'}(s_0)\cap ((S'\setminus T)\cdot A)^*T = \FPaths_{\M}(s_0)\cap ((S\setminus T)\cdot A)^*T$, since in the construction of $\M'$ we only remove states of $\M$ from which no state in $T$ is reachable. Therefore, using  Lemma~\ref{lemma:cylRelation}, we get:
    \small
	\begin{align*}
		\Pr'_{\sigma,s_0}(\Diamond T)
		& = \sum_{\rho\in\FPaths_{\M'}(s_0)\cap ((S'\setminus T)A)^*T}
		\Pr'_{\sigma,s_0}(\Cyl'_{\sigma}(\rho))\\
		& = \sum_{\rho\in\FPaths_\M(s_0)\cap ((S\setminus T)A)^*T}
		\frac{\Pr_{\sigma,s_0}(\Cyl_{\sigma}(\rho))}{\Val_{\M}(s_0)}\\
		& = \frac{\Pr_{\sigma,s_0}(\Diamond T)}{\Val_{\M}(s_0)}\,.\qedhere
	\end{align*}
    \normalsize
\end{proof}

\begin{corollary}\label{corr:len}
	For %
 every $\sigma\in\Sigma(\M)$, $\sigma \in \Sigma_{\M,s_0}(\Diamond T)$ iff $\Pr'_{\sigma,s_0}(\Diamond T) = 1$.
\end{corollary}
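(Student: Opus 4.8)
The plan is to derive the corollary as an almost immediate consequence of Lemma~\ref{lemma:reachabiltyMax}, after dispatching two boundary issues: the quantity $\Pr'_{\sigma,s_0}$ is only defined when $\sigma$ is a strategy of $\M'$, i.e. when $\sigma\in\Sigma^{\Opt}_{\M}$, and $\M'$ only contains $s_0$ when $\Val_{\M}(s_0)>0$. So first I would observe that it suffices to treat the case $\sigma\in\Sigma^{\Opt}_{\M}$: if $\sigma\notin\Sigma^{\Opt}_{\M}$, then by the contrapositive of Lemma~\ref{lem:localOptCoverReach} we have $\sigma\notin\Sigma_{\M,s_0}(\Diamond T)$, and $\Pr'_{\sigma,s_0}(\Diamond T)$ is not defined (equivalently, does not equal $1$), so both sides of the stated equivalence fail. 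I would also note that the degenerate case $\Val_{\M}(s_0)=0$ is excluded (or handled separately, as it makes the conditional-length objective vacuous); hence we may assume $\Val_{\M}(s_0)>0$ and $s_0\in S'$.

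Next, for $\sigma\in\Sigma^{\Opt}_{\M}$, I would simply chain equivalences. By definition, $\sigma\in\Sigma_{\M,s_0}(\Diamond T)$ iff $\Pr_{\sigma,s_0}(\Diamond T)=\Val_{\M}(s_0)$, where $\Val_{\M}(s_0)=\max_\tau\Pr_{\tau,s_0}(\Diamond T)$ is the maximum reachability probability. Dividing by $\Val_{\M}(s_0)>0$, this is equivalent to $\Pr_{\sigma,s_0}(\Diamond T)/\Val_{\M}(s_0)=1$. By Lemma~\ref{lemma:reachabiltyMax}, the left-hand side equals $\Pr'_{\sigma,s_0}(\Diamond T)$, so the condition becomes $\Pr'_{\sigma,s_0}(\Diamond T)=1$, which is exactly the right-hand side of the claimed equivalence. (One should keep in mind that this uses $\Pr_{\sigma,s_0}(\Diamond T)\le\Val_{\M}(s_0)$ for all $\sigma$, which is implicit in the definition of $\Val_{\M}$ as a maximum, so the division genuinely characterizes optimality rather than merely being an algebraic rearrangement.)

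I do not anticipate a real obstacle here; the only thing to be careful about is the quantifier "for every $\sigma\in\Sigma(\M)$" in the statement, which forces one to say something about strategies outside $\Sigma^{\Opt}_{\M}$ for which $\Pr'_{\sigma,s_0}$ is not literally defined — this is where Lemma~\ref{lem:localOptCoverReach} is invoked to guarantee such strategies are non-optimal anyway. Everything else is the one-line computation above together with the trivial observation that $\Val_{\M}(s_0)>0$ can be assumed without loss of generality.
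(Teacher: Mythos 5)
Your proposal is correct and matches the paper's intended argument: the corollary is stated as an immediate consequence of Lemma~\ref{lemma:reachabiltyMax}, obtained by dividing the optimality condition $\Pr_{\sigma,s_0}(\Diamond T)=\Val_{\M}(s_0)$ by $\Val_{\M}(s_0)>0$, exactly as you do. Your explicit handling of the boundary cases (strategies outside $\Sigma^{\Opt}_{\M}$ via Lemma~\ref{lem:localOptCoverReach}, and the degenerate case $\Val_{\M}(s_0)=0$) is a welcome bit of extra care that the paper leaves implicit.
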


\noindent Since for every $\sigma \in \Sigma_{\M,s_0}(\Diamond T)$, $\Pr_{\sigma,s_0}(\Diamond T \mid \Diamond T) = 1$, we can write:
\small
\begin{align*}
\E_{\sigma,s_0}(\len_T\mid  \Diamond T)
&= \sum_{r=0}^{\infty}r \cdot
		\frac{\Pr_{\sigma,s_0}(\{\rho \mid \rho \models \Diamond T \land \len_T(\rho) = r\})}{\Pr_{\sigma,s_0}(\Diamond T)} \\
& = \sum_{r=0}^{\infty}r \cdot \sum_{\rho \in \FPaths_{\M}(s_0)\cap ((S\setminus T)A)^*T : \len_T(\rho) = r}
    \frac{\Pr_{\sigma,s_0}(\Cyl_\sigma(\rho))}{\Val_\M(s_0)}\,.
\end{align*}
\normalsize

We now relate the expected length of reaching a target state in $\M'$ with the expected conditional length of reaching a target state in $\M$.

\begin{lemma}\label{lemma:expectationMin}
	For any strategy $\sigma\in\Sigma_{\M,s_0}(\Diamond T)$,
	$\E'_{\sigma,s_0}(\len_T) =
\E_{\sigma,s_0}(\len_T\mid  \Diamond T)$.
\end{lemma}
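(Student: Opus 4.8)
The plan is to chain together three facts already established: Corollary~\ref{corr:len} (which tells us $\Pr'_{\sigma,s_0}(\Diamond T)=1$), Lemma~\ref{lemma:cylRelation} (the cylinder-measure correspondence between $\M$ and $\M'$), and the explicit expansion of $\E_{\sigma,s_0}(\len_T\mid\Diamond T)$ displayed just before the statement. Fix $\sigma\in\Sigma_{\M,s_0}(\Diamond T)$. By Lemma~\ref{lem:localOptCoverReach}, $\sigma\in\Sigma^{\Opt}_{\M}=\Sigma(\M')$, so $\sigma$ is a legal strategy in $\M'$ and $\E'_{\sigma,s_0}(\len_T)$ is meaningful; by Corollary~\ref{corr:len}, $\Pr'_{\sigma,s_0}(\Diamond T)=1$, so $\len_T$ is $\Pr'_{\sigma,s_0}$-almost-surely finite and we may write $\E'_{\sigma,s_0}(\len_T)=\sum_{r=0}^{\infty} r\cdot\Pr'_{\sigma,s_0}(\{\rho'\mid\len_T(\rho')=r\})$.

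Next I would decompose, for each $r$, the event $\{\len_T=r\}$ as the disjoint union of the cylinders $\Cyl'_\sigma(\rho)$ over the finite paths $\rho\in\FPaths_{\M'}(s_0,\sigma)\cap((S'\setminus T)\cdot A)^*T$ with $\len_T(\rho)=r$ — precisely the paths that first hit $T$ at step $r$ — so that countable additivity gives $\Pr'_{\sigma,s_0}(\{\len_T=r\})=\sum_\rho\Pr'_{\sigma,s_0}(\Cyl'_\sigma(\rho))$. Lemma~\ref{lemma:cylRelation} applies to each such $\rho$ and yields $\Pr'_{\sigma,s_0}(\Cyl'_\sigma(\rho))=\Pr_{\sigma,s_0}(\Cyl_\sigma(\rho))/\Val_\M(s_0)$. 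As already observed in the proof of Lemma~\ref{lemma:reachabiltyMax}, the indexing set of these paths is literally the same whether read in $\M'$ or in $\M$, since $\M'$ only deletes states from which $T$ is unreachable. Substituting, $\E'_{\sigma,s_0}(\len_T)$ equals
\[
\sum_{r=0}^{\infty} r\cdot\!\!\sum_{\rho\in\FPaths_{\M}(s_0)\cap((S\setminus T)A)^*T:\ \len_T(\rho)=r}\!\!\frac{\Pr_{\sigma,s_0}(\Cyl_\sigma(\rho))}{\Val_\M(s_0)},
\]
which is exactly the right-hand side of the displayed identity for $\E_{\sigma,s_0}(\len_T\mid\Diamond T)$ preceding the statement; hence the two quantities coincide.

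I do not expect a serious obstacle here: all summands are nonnegative, so Tonelli licenses the reordering of the double sum over $r$ and $\rho$ without any additional integrability argument, and the only edge case to mention explicitly is $s_0\in T$, where both sides are $0$. The one point worth stating carefully — rather than hard — is the matching of the two index sets of target-reaching finite paths across $\M$ and $\M'$, which is inherited verbatim from Lemma~\ref{lemma:reachabiltyMax}.
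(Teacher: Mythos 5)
Your proposal is correct and follows essentially the same route as the paper's proof: expand $\E'_{\sigma,s_0}(\len_T)$ as $\sum_r r\cdot\Pr'_{\sigma,s_0}(\{\len_T=r\})$, decompose each event into cylinders of first-hitting paths, apply Lemma~\ref{lemma:cylRelation}, identify the index sets of target-reaching paths in $\M$ and $\M'$, and match the result against the displayed expansion of $\E_{\sigma,s_0}(\len_T\mid\Diamond T)$. The extra remarks on Tonelli and the $s_0\in T$ edge case are harmless additions the paper leaves implicit.
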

\begin{proof}
Using $\FPaths_{\M'}(s_0)\cap ((S'\setminus T)\cdot A)^*T = \FPaths_{\M}(s_0)\cap ((S\setminus T)\cdot A)^*T$, Lemma~\ref{lemma:cylRelation} and Corollary~\ref{corr:len}, we get:
 \small
	\begin{align*}
		\E'_{\sigma,s_0}(\len_T)
		& = \sum_{r=0}^{\infty}r \cdot
		{\Pr'_{\sigma,s_0}(\{\rho \mid \rho \models \Diamond T \land \len_T(\rho) = r\})}\\
		& = \sum_{r=0}^{\infty}r \cdot \sum_{\rho \in \FPaths_{\M'}(s_0)\cap ((S'\setminus T)A)^*T : \len_T(\rho) = r}
		{\Pr'_{\sigma,s_0}(\Cyl'_\sigma(\rho))}\\
		& = \sum_{r=0}^{\infty}r \cdot \sum_{\rho \in \FPaths_{\M}(s_0)\cap ((S\setminus T)A)^*T : \len_T(\rho) = r}
		\frac{\Pr_{\sigma,s_0}(\Cyl_\sigma(\rho))}{\Val_{\M}(s_0)} \\%
		& = {\E_{\sigma,s_0}(\len_T\mid  \Diamond T)}\,.\qedhere
	\end{align*}
 \normalsize
\end{proof}

Finally, we prove the correctness of Algorithm~\ref{alg:distMin}:
\begin{theorem}\label{thm:1}
 Given an MDP $\M = (S, A, P)$, a state $s_0 \in S$ and $T \subseteq S$,
 let $\sigma^*$ be the strategy  returned by Algorithm~\ref{alg:distMin}. %
 Then,
	\begin{enumerate}
		\item $\Pr_{\sigma^*,s_0}(\Diamond T) = \Val_{\M}(s_0)$.
		\item $\displaystyle \E_{\sigma^*,s_0}(\len_T\mid\Diamond T) =
		\min_{\sigma\in \Sigma_{\M,s_0}(\Diamond T)}\E_{\sigma,s_0}(\len_T\mid\Diamond T)$
	\end{enumerate}
\end{theorem}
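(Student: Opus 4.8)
The plan is to verify the two claims directly from the lemmas already established, since all the heavy lifting has been done. For the first claim, recall that $\sigma^*$ is chosen in $\argmin_\sigma \E'_{\sigma,s_0}(\len_T)$, so in particular $\sigma^* \in \Sigma(\M')$. By the remark following Definition~\ref{def:mprime-from-m}, $\Sigma(\M') = \Sigma^{\Opt}_{\M}$, so Lemma~\ref{lemma:reachabiltyMax} applies and gives $\Pr'_{\sigma^*,s_0}(\Diamond T) = \Pr_{\sigma^*,s_0}(\Diamond T)/\Val_{\M}(s_0)$. It therefore suffices to argue $\Pr'_{\sigma^*,s_0}(\Diamond T) = 1$; by Corollary~\ref{corr:len} this is equivalent to $\sigma^* \in \Sigma_{\M,s_0}(\Diamond T)$. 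To see that the minimizer of expected length in $\M'$ must reach $T$ almost surely in $\M'$, I would observe that $\Val_{\M}(s_0) > 0$ (otherwise $s_0 \notin S'$ and the problem is degenerate; this edge case may warrant an explicit remark), so there is at least one strategy in $\Sigma_{\M,s_0}(\Diamond T)$, and by Corollary~\ref{corr:len} such a strategy reaches $T$ with probability $1$ in $\M'$, hence has finite expected length $\E'_{\sigma,s_0}(\len_T)$ (finiteness here uses that in $\M'$ every state reaches $T$ and $S'$ is finite, so reachability with probability $1$ forces finite expected hitting time). Any strategy $\tau$ with $\Pr'_{\tau,s_0}(\Diamond T) < 1$ has $\E'_{\tau,s_0}(\len_T) = \infty$ under the natural convention that $\len_T = \infty$ on paths avoiding $T$. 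So the minimizer $\sigma^*$ cannot be such a $\tau$, giving $\Pr'_{\sigma^*,s_0}(\Diamond T) = 1$ and hence $\Pr_{\sigma^*,s_0}(\Diamond T) = \Val_{\M}(s_0)$.

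For the second claim, the chain of equalities is now routine. Since $\sigma^* \in \Sigma_{\M,s_0}(\Diamond T)$ (just established), Lemma~\ref{lemma:expectationMin} gives $\E_{\sigma^*,s_0}(\len_T\mid\Diamond T) = \E'_{\sigma^*,s_0}(\len_T)$. By the choice of $\sigma^*$ in Algorithm~\ref{alg:distMin}, $\E'_{\sigma^*,s_0}(\len_T) = \min_{\sigma \in \Sigma(\M')} \E'_{\sigma,s_0}(\len_T)$. Now I would restrict this minimum to the sub-collection $\Sigma_{\M,s_0}(\Diamond T) \subseteq \Sigma(\M')$ (the inclusion is Lemma~\ref{lem:localOptCoverReach}): on one hand the minimum over the larger set is at most the minimum over the smaller, and on the other hand we just showed the overall minimizer $\sigma^*$ itself lies in $\Sigma_{\M,s_0}(\Diamond T)$, so the two minima coincide, i.e. $\min_{\sigma\in\Sigma(\M')}\E'_{\sigma,s_0}(\len_T) = \min_{\sigma\in\Sigma_{\M,s_0}(\Diamond T)}\E'_{\sigma,s_0}(\len_T)$. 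Finally, applying Lemma~\ref{lemma:expectationMin} termwise over $\Sigma_{\M,s_0}(\Diamond T)$ rewrites the right-hand side as $\min_{\sigma\in\Sigma_{\M,s_0}(\Diamond T)}\E_{\sigma,s_0}(\len_T\mid\Diamond T)$, which combined with the first equality yields claim~2.

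The main obstacle, such as it is, is the finiteness/almost-sure-reaching argument in claim~1: one must be careful that $\argmin_\sigma \E'_{\sigma,s_0}(\len_T)$ is taken over strategies that actually achieve finite expected length, and that at least one such strategy exists, so the $\argmin$ is non-vacuous and lands inside $\Sigma_{\M,s_0}(\Diamond T)$. This rests on the structural fact that in $\M'$ every state has positive probability of reaching $T$ (by construction $S' = \{s : \Val_\M(s) > 0\}$ and $\Opt_\M$-transitions preserve this), so a standard MDP argument gives that the minimal expected hitting time is finite and attained by a memoryless strategy that reaches $T$ with probability $1$. Everything else is bookkeeping with the lemmas. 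One presentational point: it is worth stating up front the convention $\len_T(\rho) = \infty$ when $\rho \not\models \Diamond T$, since the contrapositive "infinite expected length if $T$ is not reached a.s." is what rules out the bad minimizers.
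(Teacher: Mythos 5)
Your proof follows essentially the same route as the paper's: claim~1 is obtained by arguing via Corollary~\ref{corr:len} that any strategy outside $\Sigma_{\M,s_0}(\Diamond T)$ has infinite expected length in $\M'$ and hence cannot be the minimizer, and claim~2 follows by transporting the minimum through Lemma~\ref{lemma:expectationMin}. If anything, you are slightly more careful than the paper on the one delicate point, namely that the $\argmin$ is non-vacuous because \emph{some} strategy (e.g.\ a memoryless one) attains a finite expected hitting time in $\M'$, whereas probability-one reachability alone does not guarantee finite expectation for arbitrary history-dependent strategies.
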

\begin{proof}
	From Corollary~\ref{corr:len}, we  get that $\E'_{\sigma,s_0}(\len_T)\neq \infty$ iff ${\sigma\in\Sigma_{\M,s_0}(\Diamond T)}$.
 So if $\sigma^*\notin\Sigma_{\M,s_0}(\Diamond T)$, then $\E'_{\sigma^*,s_0}(\len_T)=\infty$.
 But since %
 for any strategy $\sigma$ in $\Sigma_{\M,s_0}(\Diamond T)$, $\E'_{\sigma,s_0}(\len_T)<\infty$, it contradicts the fact that $\sigma^*$ minimizes $\E'_{\sigma,s_0}(\len_T)$.
  Therefore, $\sigma^*\in\Sigma_{\M,s_0}(\Diamond T)$, and hence  $\Pr_{\sigma^*,s_0}(\Diamond T) = \Val_{\M}(s_0)$.

	From Lemma~\ref{lemma:expectationMin}, we get  for any $\sigma \in \Sigma_{\M,s_0}(\Diamond T)$,
	\begin{align*}
 \E_{\sigma,s_0}(\len_T\mid  \Diamond T)
	&=\E'_{\sigma,s_0}(\len_T)\\
 \implies \argmin_{\sigma\in \Sigma_{\M,s_0}(\Diamond T)} \E_{\sigma,s_0}(\len_T\mid  \Diamond T)
	&=\argmin_{\sigma\in \Sigma_{\M,s_0}(\Diamond T)} \E'_{\sigma,s_0}(\len_T)
\end{align*}
Hence,  $\sigma^*\in %
\argmin_{\sigma\in \Sigma_{\M,s_0}(\Diamond T)}\E_{\sigma,s_0}(\len_T\mid  \Diamond T)$ and therefore,
we conclude, $\displaystyle \E_{\sigma^*,s_0}(\len_T\mid\Diamond T) =
\min_{\sigma\in \Sigma_{\M,s_0}(\Diamond T)}\E_{\sigma,s_0}(\len_T\mid\Diamond T)$.
\end{proof}

Note that, constructing the MDP $\M'$ (Line~1 of Algorithm~\ref{alg:distMin}) takes polynomial time. Finding a strategy that optimizes $\E'_{\sigma,s_0}(\len_T)$ also takes polynomial time~\cite{DBLP:journals/mor/BertsekasT91}. Therefore, the overall algorithm terminates in polynomial time.

\section{Experimental Results}

We have made a prototype implementation of the pruning-based algorithm (Algorithm~\ref{alg:distMin}) described in Section~\ref{sec:reachability-expected-length}. In this section, we compare the performance (expected number of steps to reach the goal states) of our algorithm with the strategies generated by {\sc Storm} that (only) maximize the probability of reaching the goal states.

In our MDP, when the robot tries to move by picking a direction, the next state is determined randomly over the neighbouring positions of the robot, according to the following distribution weights: the intended direction gets a weight of $10$, and other directions that are not a wall and not the reverse direction of the intended one get a weight of $1$, the distribution is then normalized so that the weights sum up to $1$.

We generated $100$ layouts of size $10\times 10$ where we placed walls in (i) each cell in the border of the grid and (ii) with probability $0.1$, at each of other cells. We then placed holes in the remaining empty cells with the same probability. Finally, we chose the position of the target and the starting position from the remaining empty cells uniformly at random.

From these layouts, we constructed MDPs described in the {\sc Prism} language, a format supported by {\sc Storm}.
For each MDP, we extracted two strategies: (i) a strategy $\sigma_{\mathrm{Storm}}\in \Sigma_{\M,s}(\Diamond T)$ that is produced by {\sc Storm} that optimizes the probability to reach the target, and
(ii) $\sigma_{\mathrm{DistOpt}}$, a strategy that is derived from Algorithm~\ref{alg:distMin}.
Note that, both of these strategies are optimal for  the probability to reach the target. However, the first strategy does not focus on optimizing the length to reach the target.
For both of these strategies, we calculate the expected conditional distance to the target in their induced Markov chains.
Table~\ref{table:10layouts} reports on our experimental results for a representative subset of the $100$ layouts we generated,
one of each decile
(one layout from the $10$ best percents, one from the $10-20\%$ range, etc).
\begin{table}[t]
	\centering
\begin{tabular}{|c||c|c|c|c|}
\hline
layouts ($\M$) & $\Val_{\M}(s_0, \Diamond T)$ & Shortest distance & $v_{DistOpt}$ & $v_{Storm}$\\
\hline
\hline
1 & 0.66 & 9 & 76.48 & 76.48 \\ %
2 & 0.52 & 18 & 299.75 & 629.16 \\ %
3 & 1.00 & 2 & 2.40 & 12.12 \\ %
4 & 1.00 & 3 & 3.44 & 34.47 \\ %
5 & 1.00 & 6 & 7.71 & 137.56 \\ %
6 & 0.68 & 10 & 264.04 & 9598.81 \\ %
7 & 1.00 & 5 & 112.69 & 9367.02 \\ %
8 & 0.91 & 10 & 11.49 & 5879.63 \\ %
9 & 1.00 & 3 & 3.66 & 5711.76 \\ %
10 & 0.91 & 5 & 12.89 & 149357.57 \\ %
\hline
\end{tabular} 	\caption{Comparison of the expected conditional length to reach the target for the strategies given by Algorithm~\ref{alg:distMin} ($v_{DistOpt}$) and {\sc Storm} ($v_{Storm}$) on some of the randomly generated layouts, sorted by their ratio.
 `Shortest distance' refer to the length of the shortest path to the target (without considering the stochastic dynamics of Frozen Lake) and `$\Val_{\M}(s_0, \Diamond T)$' represents the maximum probability of reaching the target from the initial position of the robot ($s_0$).}
	\label{table:10layouts}
\end{table}

Observe that the strategy given by Algorithm~\ref{alg:distMin} does not necessarily suggest following the shortest path, as this may not optimize the first objective (reaching the target with maximum probability).
For example, in the layout in Figure~\ref{fig:frozenLake}, the `shortest' path to the target has length $10$. But if we need to maximize the probability to reach the target, from the cell in the grid marked with~$1$, instead of going right, a better strategy would be to keep going to the cell above and then coming back. This way, the agent will avoid the hole below with certainty, and will eventually go to the right. This is the strategy that Algorithm~\ref{alg:distMin} provides, which has expected conditional length to the target $33.85$.
On the other hand, the expected conditional length to the target while following the optimal strategy produced by {\sc Storm} is much larger ($345.34$). This is because it asks the robot to loop in the $6\times 3$ area in the left. Because of the stochastic dynamics it eventually leaves this area and reaches the target, but it may take a long time, increasing the expected conditional length.

While performing the experiments on the $100$ randomly generated layouts, we observed that
in $9$ layouts out of $10$, %
the expected conditional length ($v_{\mathrm{Storm}}$) for the strategy $\sigma_{\mathrm{Storm}}$ is at least twice the expected conditional length ($v_{\mathrm{DistOpt}}$) for the strategy $\sigma_{\mathrm{DistOpt}}$.
In $69\%$ of the layouts, $v_{\mathrm{Storm}}$ values are $10$ times worse than the $v_{\mathrm{DistOpt}}$ values.
In the worst cases ($23\%$ of the layouts), $v_{\mathrm{Storm}}$ values are at least a $1000$ times worse than the $v_{\mathrm{DistOpt}}$ values.

\section{Safety and Expected Mean Payoff}
\label{sec:safety-expected-mp}

In this section, we consider another multi-objective problem -- as a first objective, we maximize the probability of avoiding a set of states in an MDP, and as a second objective, we maximize the expected conditional Mean Payoff. We propose a pruning-based algorithm, similar to Algorithm~\ref{alg:distMin}, to solve this problem.

For this section, we augment
the definition of an MDP $\M$  with a \emph{reward} function $R:S\times A\to \R$, where $S, A$ and $P$ are the same as in the previous sections. Furthermore, we consider a set of states $\Bad \subset S$ in $\M$ and assume that every state in $\Bad$ is a sink state.%

For an MDP $\M = (S,A,P,R)$,
let $\Pr_{\M_{\sigma},s}(\Box \neg \Bad)$ be the probability of avoiding all states in $\Bad$, starting from $s \in S$, following the strategy $\sigma$ in $\M$.
Then, let $\Val_\M(s) = \max_{\sigma}\Pr_{\M_{\sigma},s}(\Box \neg \Bad)$ be the maximum probability to avoid $\Bad$ from $s$, and $\Sigma_{\M,s}(\Box \neg \Bad) = \argmax_{\sigma}\Pr_{\M_{\sigma},s}(\Box\neg\Bad)$ be the set of all optimal strategies for safety.

To formally define the second objective, %
we first define the \emph{total reward} of horizon $n$ for a path $\rho = s_0a_0\ldots $ %
as $\Reward_n(\rho)=\sum_{i=0}^{n-1}R(s_i,a_i)$.
Then, for a strategy $\sigma$ and a state $s$, the \emph{expected mean-payoff} is defined as
$$\E_{\sigma,s}(\MP) = \liminf_{n\to\infty} \frac{1}{n} \E_{\sigma,s}(\Reward_n)\,.$$

The optimal \emph{expected average reward} starting from a state $s$ in an MDP $\M$
is defined over all strategies $\sigma$ in $\M$ as
$\sup_{\sigma}\E_{\sigma,s}(\MP)$.
One can restrict the supremum to the deterministic memoryless
strategies~\cite[section 9.1.4]{DBLP:books/wi/Puterman94}.

We  use $\E_{\sigma,s}(\Reward_n\mid \Box\neg\Bad)$ to denote the \emph{expected conditional finite horizon reward}.
Then the \emph{expected conditional mean-payoff} is defined as
$$\E_{\sigma,s}(\MP\mid \Box\neg \Bad) = \liminf_{n\to\infty}\frac{1}{n}\E_{\sigma,s}(\Reward_n\mid \Box\neg\Bad)\,.$$
Intuitively, it represents the expected mean-payoff one would obtain by following the strategy $\sigma$ and staying safe.

\subsection*{Problem statement.}

Given an MDP $\M$, an initial state $s_0$ and a set of states $\Bad$ where ${\Val_\M(s_0)>0}$,
our objective is to find a strategy that maximizes  $\E_{\M_{\sigma},s_0}(\MP \mid \Box\neg\Bad)$ among the strategies in $\Sigma_{\M,s_0}(\Box \neg \Bad)$, i.e., the strategies maximizing $\Pr_{\M_{\sigma},s_0}(\Box \neg \Bad)$.

For the rest of this section, we fix the MDP $\M = (S, A, P, R)$ and a set of bad states $\Bad \subset S$.
Note that, in this case, the functions $\sigma\mapsto \Pr_{{\sigma},s_0}(\Box \neg \Bad)$ and $\sigma\mapsto \E_{{\sigma},s_0}(\MP \mid \Box\neg\Bad)$ correspond to the two functions $f$ and $g$, respectively, and $\Sigma_{\M,s_0}(\Box \neg \Bad)$ corresponds to $\Sigma_f$, described in the introduction (Eq.~\ref{eq:conds}).

\subsection{Maximizing Probability of Staying Safe}

We denote the set $\left\{(s,a)\in S\times A\mid \Val_{\M}(s) = \sum_{s'}P(s,a,s')\cdot \Val_{\M}(s')\right\}$ using $\Opt_{\M}$. %
  For $s\in S$, let $\Opt_{\M}(s)$ be the set $\{a \mid (s,a)\in \Opt_{\M}\}$.
Finally, we use $\Sigma^{\Opt}_{\M}$ to represent the set of strategies that takes actions according to $\Opt_{\M}$, that is,
$\Sigma^{\Opt}_{\M} = \{\sigma\mid \forall \rho, \forall a\in\Supp(\sigma(\rho));(\last(\rho),a)\in \Opt_{\M}\}\,.$

We first state the following results, analogous to Lemma~\ref{lemma:case1} and~\ref{lem:localOptCoverReach} respectively, which can be proved similarly as in the case of reachability.
\begin{lemma}\label{lemma:case1Safety}
    For every state $s \in S\setminus \Bad$ and for every action $a$,
	\[\Val_{\M}(s)\geq \sum_{s'}P(s,a,s')\cdot \Val_{\M}(s')\,.
 \]
\end{lemma}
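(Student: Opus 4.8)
The plan is to reproduce the proof of Lemma~\ref{lemma:case1} almost verbatim, substituting the safety objective $\Box\neg\Bad$ for the reachability objective $\Diamond T$ and using that every state of $\Bad$ is a sink. Suppose toward a contradiction that there is a state $s\in S\setminus\Bad$ and a (legal) action $a$ with
\[\Val_{\M}(s) < \sum_{s'}P(s,a,s')\cdot\Val_{\M}(s')\,.\]
Consider the strategy $\sigma'$ that plays $a$ from the length-$0$ path $s$ and, from any path of the form $s\cdot as'$, continues according to an optimal safety strategy $\sigma_{s'}\in\Sigma_{\M,s'}(\Box\neg\Bad)$ from $s'$, i.e.
\begin{align*}
	\sigma'(\rho)=&\begin{cases}
		a &\text{if } \rho = s\\
		\sigma_{s'}(\rho') &\text{if } \rho = s\cdot as'\cdot \rho'.
	\end{cases}
\end{align*}

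Next I would compute $\Pr_{\sigma',s}(\Box\neg\Bad)$. Since $s\notin\Bad$, the first transition does not by itself violate safety, so a path from $s$ under $\sigma'$ stays safe exactly when its suffix starting at the first successor $s'$ stays safe. Decomposing over this successor and using that from $s\cdot as'$ the behaviour of $\M_{\sigma'}$ coincides with that of $\M_{\sigma_{s'}}$ from $s'$, we get
\[\Pr_{\sigma',s}(\Box\neg\Bad) = \sum_{s'}P(s,a,s')\cdot\Pr_{\sigma_{s'},s'}(\Box\neg\Bad) = \sum_{s'}P(s,a,s')\cdot\Val_{\M}(s')\,,\]
where the last equality uses optimality of each $\sigma_{s'}$ (and is consistent with the case $s'\in\Bad$, where $\Val_{\M}(s')=0$ since $\Bad$-states are sinks). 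By the assumed strict inequality this exceeds $\Val_{\M}(s)=\max_\tau\Pr_{\tau,s}(\Box\neg\Bad)$, a contradiction. Hence no such $s,a$ exist.

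I do not expect any real obstacle here; the statement is a routine adaptation. The only two points that need a bit of care are: (i) the hypothesis $s\in S\setminus\Bad$ is precisely what makes the one-step unfolding of $\Box\neg\Bad$ legitimate at $s$ (the prefix $s$ must not already falsify safety); and (ii) as in Lemma~\ref{lemma:case1}, one should restrict attention to actions $a$ legal from $s$, so that $\sum_{s'}P(s,a,s')=1$ and $\sigma'$ is a well-defined strategy — for illegal actions the inequality is vacuous. The companion statement (the analogue of Lemma~\ref{lem:localOptCoverReach}, namely $\Sigma_{\M,s}(\Box\neg\Bad)\subseteq\Sigma^{\Opt}_{\M}$) would then be obtained by the same strategy-improvement argument: replace a sub-optimal action along some path $\rho$ by an optimal-safety continuation from $\last(\rho)$, and propagate the strict improvement back through the prefix decomposition of $\Pr_{\sigma,s}(\Box\neg\Bad)$ along the cylinder of $\rho$, contradicting optimality.
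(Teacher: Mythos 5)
Your proof is correct and matches the paper's own argument essentially verbatim: both assume a violating pair $(s,a)$, build the strategy $\sigma'$ that plays $a$ once and then follows optimal safety strategies from each successor, and derive $\Pr_{\sigma',s}(\Box\neg\Bad)=\sum_{s'}P(s,a,s')\cdot\Val_{\M}(s')>\Val_{\M}(s)$, a contradiction. Your added remarks on action legality and on why $s\notin\Bad$ licenses the one-step unfolding are sensible but do not change the approach.
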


\begin{lemma}\label{lem:localOptSafety}
	For every state $s \in S$, $\Sigma_{\M,s}(\Box \neg \Bad)\subseteq \Sigma^{\Opt}_{\M}$.
\end{lemma}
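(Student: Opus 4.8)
The plan is to transcribe the proof of Lemma~\ref{lem:localOptCoverReach} almost verbatim, replacing the reachability event $\Diamond T$ by the safety event $\Box\neg\Bad$, reinterpreting $\Val_\M$ as the maximal probability of staying safe, and using Lemma~\ref{lemma:case1Safety} in place of Lemma~\ref{lemma:case1}. Concretely, I would argue by contradiction: assume some $\sigma^*\in\Sigma_{\M,s}(\Box\neg\Bad)$ is not in $\Sigma^{\Opt}_\M$, pick a finite path $\rho$ (reachable with positive probability from $s$ under $\sigma^*$) together with an action $a\in\Supp(\sigma^*(\rho))$ such that $(\last(\rho),a)\notin\Opt_\M$, and set $t=\last(\rho)$.

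The first genuinely new step --- and the only place where the safety and reachability arguments differ --- is to check that $t\notin\Bad$. Indeed, every state of $\Bad$ is a sink whose unique action $a'$ satisfies $P(t,a',t)=1$, so $\sum_{s'}P(t,a',s')\Val_\M(s')=\Val_\M(t)$ and thus $(t,a')\in\Opt_\M$; a state admitting an action outside $\Opt_\M$ therefore cannot lie in $\Bad$. This is exactly what is needed to invoke Lemma~\ref{lemma:case1Safety} (stated only for states in $S\setminus\Bad$), and it also guarantees that $\rho$ is a safe path, since $\Bad$-states are absorbing. From Lemma~\ref{lemma:case1Safety} and $(t,a)\notin\Opt_\M$ I get $\Val_\M(t)>\sum_{s'}P(t,a,s')\Val_\M(s')$, with the corresponding $\geq$ inequality for every other action of $t$.

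From here the argument is mechanical. I would define $\overline{\sigma}^*$ to follow an optimal-for-safety strategy $\sigma_t\in\Sigma_{\M,t}(\Box\neg\Bad)$ on all extensions of $\rho$ and to agree with $\sigma^*$ elsewhere. Using $\Pr_{\sigma^*,\rho\cdot a's'}(\Box\neg\Bad)\le\Val_\M(s')$ (definition of $\Val_\M$ as a supremum) together with the two inequalities above, the same one-line computation as in Lemma~\ref{lem:localOptCoverReach} gives $\Pr_{\sigma^*,\rho}(\Box\neg\Bad)<\Val_\M(t)=\Pr_{\overline{\sigma}^*,\rho}(\Box\neg\Bad)$. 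Then, for any strategy $\sigma$, I would split
$$\Pr_{\sigma,s}(\Box\neg\Bad)=\Pr_{\sigma,s}(\Cyl_\sigma(\rho))\cdot\Pr_{\sigma,\rho}(\Box\neg\Bad)+\Pr_{\sigma,s}\big(\rho'\models\Box\neg\Bad\wedge\rho\not\sqsubseteq\rho'\big),$$
the factorization of the first term being legitimate because $\rho$ is safe and safety of an extension of $\rho$ depends only on its suffix after $\rho$. Since $\sigma^*$ and $\overline{\sigma}^*$ coincide on every path that does not have $\rho$ as a prefix, both the probability $\Pr_{\sigma,s}(\Cyl_\sigma(\rho))$ and the second summand above take the same value for $\sigma=\sigma^*$ and $\sigma=\overline{\sigma}^*$; since moreover $\Pr_{\sigma^*,s}(\Cyl_{\sigma^*}(\rho))>0$ and $\Pr_{\sigma^*,\rho}(\Box\neg\Bad)<\Pr_{\overline{\sigma}^*,\rho}(\Box\neg\Bad)$, we obtain $\Pr_{\sigma^*,s}(\Box\neg\Bad)<\Pr_{\overline{\sigma}^*,s}(\Box\neg\Bad)$, contradicting the optimality of $\sigma^*$. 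I expect no real obstacle here: the whole difficulty is the bookkeeping of the conditional/cylinder decomposition, already carried out in the reachability case, together with the small but essential observation that $\last(\rho)\notin\Bad$.
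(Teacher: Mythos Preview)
Your proposal is correct and follows essentially the same route as the paper's proof: a contradiction argument that replaces $\sigma^*$ by an optimal-from-$t$ strategy on extensions of $\rho$, bounds $\Pr_{\sigma^*,\rho}(\Box\neg\Bad)$ strictly below $\Val_\M(t)$ via Lemma~\ref{lemma:case1Safety}, and then uses the cylinder decomposition to lift the strict inequality to $s$. If anything you are slightly more careful than the paper, which simply writes ``Assume that $\rho\models\Box\neg\Bad$'' and omits the positivity of $\Pr_{\sigma^*,s}(\Cyl_{\sigma^*}(\rho))$, whereas you justify both points explicitly.
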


Furthermore, we will show that, unlike reachability, in this case, the other direction of the containment also holds:
\begin{lemma}\label{lem:localOptSafety-1}
	For every state $s \in S$, $\Sigma_{\M,s}(\Box \neg \Bad)\supseteq \Sigma^{\Opt}_{\M}$.
\end{lemma}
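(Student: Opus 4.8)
The plan is to show that every strategy $\sigma \in \Sigma^{\Opt}_{\M}$ achieves $\Pr_{\M_\sigma, s}(\Box \neg \Bad) = \Val_\M(s)$ for every state $s$, hence lies in $\Sigma_{\M, s}(\Box \neg \Bad)$. The key structural fact I want to exploit is that, along any path consistent with a strategy in $\Sigma^{\Opt}_{\M}$, the quantity $\Val_\M(\last(\rho))$ behaves like a martingale under the induced Markov chain: by definition of $\Opt_\M$, whenever $\sigma$ plays $a$ from $\last(\rho) = s$ we have exactly $\Val_\M(s) = \sum_{s'} P(s,a,s') \Val_\M(s')$, so the expected value of $\Val_\M$ at the next step equals its current value. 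Taking the expectation over the (finitely supported) distribution $\sigma(\rho)$ preserves this equality since it is a convex combination of equalities.

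Concretely, first I would fix $\sigma \in \Sigma^{\Opt}_{\M}$ and a state $s$, and define for each horizon $n$ the random variable $X_n$ on paths of $\M_\sigma$ from $s$ by $X_n(\rho) = \Val_\M(\rho[n])$ (with the convention that once a path is in $\Bad$ it stays there, and on $\Bad$ the value is $0$; note states in $\Bad$ are sinks so this is consistent). Using Lemma~\ref{lemma:case1Safety} one checks the martingale identity $\E_{\sigma,s}(X_{n+1} \mid \rho_{|n}) = X_n(\rho_{|n})$ for paths that have not yet hit $\Bad$ — this uses that the action played is in $\Opt_\M$ — and that once $\Bad$ is hit the value is constantly $0$. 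Hence $\E_{\sigma,s}(X_n) = \Val_\M(s)$ for all $n$. Next I would split $\E_{\sigma,s}(X_n)$ according to whether the path has reached $\Bad$ by step $n$: the contribution of paths that have hit $\Bad$ is $0$, so $\E_{\sigma,s}(X_n) = \E_{\sigma,s}(X_n \cdot \mathbf{1}[\text{safe up to } n])$, which is at most $\Pr_{\sigma,s}(\text{safe up to step } n)$ since $\Val_\M \le 1$. Letting $n \to \infty$, the events ``safe up to step $n$'' decrease to $\Box \neg \Bad$, so $\Pr_{\sigma,s}(\text{safe up to step } n) \to \Pr_{\sigma,s}(\Box \neg \Bad)$, giving $\Val_\M(s) \le \Pr_{\sigma,s}(\Box \neg \Bad)$. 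The reverse inequality $\Pr_{\sigma,s}(\Box \neg \Bad) \le \Val_\M(s)$ is immediate from the definition of $\Val_\M$ as the maximum. Thus $\sigma$ is optimal for safety from every state, so $\sigma \in \Sigma_{\M,s}(\Box \neg \Bad)$.

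The main obstacle I anticipate is the limiting argument turning the martingale equality at finite horizons into the statement about the infinite-horizon safety probability; in particular one must be careful that the ``leakage'' of probability mass into $\Bad$ is exactly accounted for by the drop in $\E(X_n)$ from the safe part, and that no mass is lost ``at infinity'' without ever hitting $\Bad$. This is handled cleanly because $X_n$ is bounded in $[0,1]$, so dominated convergence (or monotone convergence on the decreasing safe events) applies directly; there is no integrability subtlety. Alternatively, and perhaps more simply for the write-up, one can avoid martingales entirely: let $p_n(s)$ denote $\Pr_{\sigma,s}(\text{safe up to step } n)$ and prove by induction on $n$ that $p_n(s) \ge \Val_\M(s)$ using Lemma~\ref{lemma:case1Safety} applied with the actions $\sigma$ plays — wait, this needs the $\Opt_\M$ equality, not the inequality, so one uses that $(\last(\rho), a) \in \Opt_\M$ for all $a \in \Supp(\sigma(\rho))$ — then take $n \to \infty$. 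I would present this induction version as it is the most self-contained, and remark that it parallels the fixpoint characterization of $\Val_\M$.
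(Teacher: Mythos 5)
Your proposal is correct, and it takes a genuinely different route from the paper. The paper derives the lemma from a chain of intermediate results: it introduces the sets $\UPre^*(\Bad)$, $\Good$ and $V=S\setminus(\Good\cup\Bad)$, shows that $\Val_\M(s)=1$ iff $s\in\Good$ (Lemma~\ref{lem:safetyGood1}), that strategies in $\Sigma^{\Opt}_\M$ never leave $\Good$ once there (Lemma~\ref{lem:safetyGood2}), unfolds both $\Val_\M$ and $\Pr_{\sigma,s}(\Box\neg\Bad)$ for $n$ steps over paths classified by whether they have reached $\Good$ or stayed in $V$ (Lemmas~\ref{lem:safety1} and~\ref{lem:safety2}), and finally bounds the difference by the probability of remaining in $V$ for $n$ steps, which vanishes because every state of $V$ has a bounded-length path into $\Bad$ under any strategy. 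Your martingale argument short-circuits all of this: from $\E_{\sigma,s}(X_n)=\Val_\M(s)$, the fact that $X_n$ vanishes on paths that have already hit the absorbing set $\Bad$, and the trivial bound $X_n\le 1$ on the safe part, you get $\Val_\M(s)\le\Pr_{\sigma,s}(\text{safe up to }n)$ at every finite horizon, and continuity from above of the probability measure finishes the proof --- no $\UPre^*$ fixpoint, no case split on $\Good$ versus $V$, and no argument that probability mass cannot linger in $V$ forever. What the paper's longer route buys is the auxiliary structural information (optimal strategies are trapped in $\Good$, Lemma~\ref{lem:safetyGood2}) that is reused later, e.g.\ in establishing Eq.~\ref{eq:goodCyl}; your route is more self-contained and arguably cleaner for the lemma in isolation. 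One presentational caveat: your sketched induction alternative on $p_n(s)$ should be phrased over finite paths (i.e.\ over $\Pr_{\sigma_\rho,\last(\rho)}(\text{safe up to }n)$) rather than over states, since $\sigma$ may be history-dependent; the martingale formulation you give first already handles this correctly because the conditioning is on the path prefix.
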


In order to prove Lemma~\ref{lem:localOptSafety-1}, we first develop a few intermediate results. We start with defining the following notations:
\begin{align*}
	\UPre^0(\Bad) &= \Bad, \quad
	\UPre^{i+1}(\Bad) = \{s\mid \forall a, \exists s'\in \UPre^i(\Bad), P(s,a,s')> 0\},\\
	\UPre^*(\Bad) &= \bigcup_{i=0}^{\infty}\UPre^i(\Bad)\,.
\end{align*}
Furthermore, we define $\Good = S\setminus \UPre^*(\Bad)$, $V = S\setminus(\Good\cup \Bad)$.

\begin{lemma}\label{lem:safetyGood1}
	For every state $s \in S$, $\Val_{\M}(s) = 1$ iff $s\in\Good$.
\end{lemma}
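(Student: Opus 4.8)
The plan is to prove the two implications separately, with the finiteness of $S$ and $A$ playing an essential role in both.

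For the ``if'' direction, suppose $s\in\Good$. First I would record two easy facts. Since every state of $\Bad$ is a sink, one gets $\Bad=\UPre^0(\Bad)\subseteq\UPre^1(\Bad)\subseteq\cdots$, so the sequence $(\UPre^i(\Bad))_i$ is nondecreasing and, as $S$ is finite, stabilizes at some index $N$, i.e.\ $\UPre^*(\Bad)=\UPre^N(\Bad)$; and since $\Bad\subseteq\UPre^*(\Bad)$ we have $\Good\cap\Bad=\emptyset$. The key structural claim is that $\Good$ is \emph{action-closed}: for every $s\in\Good$ there is an action $a$ with $\Supp(P(s,a))\subseteq\Good$. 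Indeed $s\notin\UPre^{N+1}(\Bad)=\UPre^N(\Bad)$, and by the definition of $\UPre^{N+1}$ this means it is \emph{not} the case that every action from $s$ reaches $\UPre^N(\Bad)$ with positive probability, so some action $a$ satisfies $\Supp(P(s,a))\cap\UPre^N(\Bad)=\emptyset$, i.e.\ $\Supp(P(s,a))\subseteq S\setminus\UPre^*(\Bad)=\Good$. Choosing such an action at every state of $\Good$ defines a memoryless strategy $\sigma$ under which every path started in $s$ stays in $\Good$ forever; since $\Good\cap\Bad=\emptyset$, this gives $\Pr_{\sigma,s}(\Box\neg\Bad)=1$ and hence $\Val_{\M}(s)=1$.

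For the ``only if'' direction I would prove the contrapositive: if $s\notin\Good$, i.e.\ $s\in\UPre^*(\Bad)=\UPre^N(\Bad)$, then $\Val_{\M}(s)<1$. Because $\Val_{\M}(s)$ is a supremum over all strategies, it is not enough to show $\Pr_{\sigma,s}(\Diamond\Bad)>0$ for each $\sigma$; I need a \emph{uniform} positive lower bound. Let $p_{\min}$ be the smallest positive transition probability appearing in $P$, a positive constant since $S$ and $A$ are finite. I would show by induction on $i$ that for every $s\in\UPre^i(\Bad)$ and every strategy $\sigma$, the probability of reaching $\Bad$ within $i$ steps is at least $p_{\min}^i$. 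The base case $i=0$ is immediate. For the step, if $s\in\UPre^{i+1}(\Bad)$ then for \emph{every} action $a$ there is a successor $s'_a\in\UPre^i(\Bad)$ with $P(s,a,s'_a)\ge p_{\min}$; conditioning on the first action and the first successor and applying the induction hypothesis to the residual strategy after the prefix $s\cdot a\, s'_a$ yields $\Pr_{\sigma,s}(\Diamond^{\le i+1}\Bad)\ge\sum_a\sigma(s,a)\,p_{\min}\,p_{\min}^i=p_{\min}^{i+1}$. Instantiating this at $i=N$ gives $\Pr_{\sigma,s}(\Diamond\Bad)\ge p_{\min}^N$, hence $\Pr_{\sigma,s}(\Box\neg\Bad)\le 1-p_{\min}^N$, for \emph{all} $\sigma$, so $\Val_{\M}(s)\le 1-p_{\min}^N<1$.

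I expect the one real subtlety to be this last point: one must not jump from ``every strategy eventually hits $\Bad$ with positive probability'' to ``$\Val_{\M}(s)<1$'' directly, but instead extract the uniform bound $p_{\min}^N$, which is exactly what the bounded-horizon formulation of the induction delivers. The remaining ingredients — monotonicity and stabilization of the $\UPre^i(\Bad)$ and the action-closedness of $\Good$ — are routine consequences of finiteness and the definitions, and mirror the pattern already used for Lemmas~\ref{lemma:case1} and~\ref{lem:localOptCoverReach}.
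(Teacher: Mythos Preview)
Your proof is correct and follows essentially the same skeleton as the paper: for the ``if'' direction, both arguments show that every $s\in\Good$ has an action whose support stays in $\Good$, yielding a strategy that never leaves $\Good$; for the ``only if'' direction, both arguments show that from any $s\in\UPre^i(\Bad)$ every strategy admits a positive-probability path of length at most $i$ into $\Bad$.

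The one genuine difference is in how you conclude $\Val_{\M}(s)<1$ from this. The paper simply says ``since this path has non-zero probability, $\Val_{\M}(s)<1$'', which is licit because $\Val_{\M}$ is defined as a $\max$ (and the maximum over strategies is indeed attained for safety in finite MDPs). Your concern that one cannot pass from ``$\Pr_{\sigma,s}(\Box\neg\Bad)<1$ for every $\sigma$'' to ``$\sup_\sigma\Pr_{\sigma,s}(\Box\neg\Bad)<1$'' is therefore a little overstated in this context. That said, your explicit bound $\Pr_{\sigma,s}(\Box\neg\Bad)\le 1-p_{\min}^N$ is a clean, self-contained argument that does not rely on the existence of an optimal strategy, and the bounded-horizon induction you set up is the right way to obtain it; this buys you a more elementary proof at the cost of a few extra lines.
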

\begin{proof}
		For $s\in \Good$, $\exists a$ such that $ \Supp(P(s,a))\subseteq \Good$. This gives a strategy to surely avoid $\Bad$, and hence $\Val_{\M}(s) = 1$.

		If $s \not\in \Good$, then either (i) $s \in \Bad$, in which case $\Val_{\M}(s) = 0$, or (ii) $s \in \UPre^*(\Bad)$, and hence $s\in \UPre^i(\Bad)\setminus \UPre^{i-1}(\Bad)$ for some $i$. Then, for every action $a$, $\Supp(P(s,a))\cap \UPre^{i-1}(\Bad) \neq \emptyset $. This implies, for any strategy $\sigma$, there is a path from $s$ of length at most $i$ reaching $\Bad$ following $\sigma$. Since this path has a non-zero probability, we therefore get that $\Val_{\M}(s) < 1$.
\end{proof}
For a strategy $\sigma$ and a finite path $\rho$, we define the strategy $\sigma_{\rho}$ as follows:
for any finite path $\rho'$ starting from $\last(\rho)$, $\sigma_\rho(\rho')=\sigma(\rho\cdot \rho')$.

\begin{lemma}\label{lem:safetyGood2}
	For every strategy $\sigma \in \Sigma^{\Opt}_{\M}$, and every finite path  $\rho$  in $\M$ following $\sigma$,
	$\Pr_{\sigma_\rho, \last(\rho)}(\Box \neg \Bad) = 1$ iff $\last(\rho) \in\Good$.
\end{lemma}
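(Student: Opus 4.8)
The plan is to treat the two implications separately, with the reverse one (``$\Leftarrow$'') carrying all the work. The forward implication is immediate: if $\last(\rho)\notin\Good$, then Lemma~\ref{lem:safetyGood1} gives $\Val_{\M}(\last(\rho))<1$, and since $\Pr_{\sigma_\rho,\last(\rho)}(\Box\neg\Bad)\le\Val_{\M}(\last(\rho))$ by definition of $\Val_{\M}$ as a maximum over strategies, we get $\Pr_{\sigma_\rho,\last(\rho)}(\Box\neg\Bad)<1$. Contrapositively, $\Pr_{\sigma_\rho,\last(\rho)}(\Box\neg\Bad)=1$ forces $\last(\rho)\in\Good$.

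For the other direction I would first isolate the key observation that locally optimal actions preserve membership in $\Good$: if $s\in\Good$ and $a\in\Opt_{\M}(s)$, then $\Supp(P(s,a))\subseteq\Good$. Indeed, $s\in\Good$ means $\Val_{\M}(s)=1$ (Lemma~\ref{lem:safetyGood1}) and $(s,a)\in\Opt_{\M}$ means $\sum_{s'}P(s,a,s')\cdot\Val_{\M}(s')=1$; since each $\Val_{\M}(s')\le 1$ and the coefficients $P(s,a,s')$ sum to $1$, every $s'$ with $P(s,a,s')>0$ must satisfy $\Val_{\M}(s')=1$, hence lies in $\Good$ again by Lemma~\ref{lem:safetyGood1}.

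Next I would observe that $\sigma_\rho$ is again a strategy playing only $\Opt_{\M}$-actions: for any finite path $\rho'$ from $\last(\rho)$ and any $a\in\Supp(\sigma_\rho(\rho'))=\Supp(\sigma(\rho\cdot\rho'))$, the assumption $\sigma\in\Sigma^{\Opt}_{\M}$ yields $(\last(\rho\cdot\rho'),a)=(\last(\rho'),a)\in\Opt_{\M}$. Combining this with the key observation and inducting on the length of finite paths of $\M_{\sigma_\rho}$ starting at $\last(\rho)$, one shows that every state visited along such a path stays in $\Good$: the base case is $\last(\rho)\in\Good$, and the inductive step applies the key observation to the action selected by $\sigma_\rho$. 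Since $\Bad=\UPre^0(\Bad)\subseteq\UPre^*(\Bad)$, we have $\Good\cap\Bad=\emptyset$, so no path following $\sigma_\rho$ from $\last(\rho)$ ever meets $\Bad$; expressing $\Box\neg\Bad$ as the intersection of the cylinder events of these safe prefixes then gives $\Pr_{\sigma_\rho,\last(\rho)}(\Box\neg\Bad)=1$ (in fact $\Box\neg\Bad$ holds surely).

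I do not expect a genuine obstacle; the only place needing mild care is making the induction precise and connecting ``every finite prefix stays in $\Good$'' to a statement about the infinite-horizon measure, which is routine since the set of paths that eventually hit $\Bad$ is a countable union of cylinders each of which is ruled out by the inductive invariant. It is also worth stating explicitly that $\sigma_\rho$ is applied exactly in the regime where it is defined, namely to paths $\rho'$ with $\last(\rho)$ as their first state; the hypothesis that $\rho$ follows $\sigma$ is not actually needed for the argument but does no harm.
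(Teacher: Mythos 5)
Your proposal is correct and follows essentially the same route as the paper: the ``only if'' direction via $\Pr_{\sigma_\rho,s}(\Box\neg\Bad)\le\Val_\M(s)<1$ for $s\notin\Good$, and the ``if'' direction via the observation that every $\Opt_\M$-action from a $\Good$ state has all its successors in $\Good$ (forced by $\Val_\M(s)=1$ being a convex combination of values $\le 1$), so that all paths under $\sigma_\rho$ stay in $\Good$ and hence surely avoid $\Bad$. Your version merely makes explicit the induction and the measurability bookkeeping that the paper leaves implicit.
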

\begin{proof}
	We denote $\last(\rho)$ by $s$.
	First, let $s\in \Good$.
 Then we can show that $\forall a\in \Opt_{\M}(s), \Supp(P(s,a))\subseteq \Good$.
 Indeed, if there exists an action $a \in\Opt_{\M}(s)$ and a state $s' \in \Supp(P(s,a))$ such that $s' \notin \Good$, then from Lemma~\ref{lem:safetyGood1}, $\Val(s')<1$, which would further imply that
 \[\Val_{\M}(s) = \sum_{s'}P(s,a,s') \cdot \Val_{\M}(s') < \sum_{s'}P(s,a,s') = 1\,,
 \]
 which contradicts the fact that $s \in \Good$ (using Lemma~\ref{lem:safetyGood1}).
	So for every strategy $\sigma$ in $\Sigma^{\Opt}_{\M}$, every path from $s$ following $\sigma_\rho$  only visits states from $\Good$.
	Therefore, $\Pr_{\sigma_\rho,s}(\Box\neg\Bad)=1$.

	To conclude, observe that if $s \in S \setminus \Good$,
	$\Pr_{\sigma_\rho, s}(\Box \neg \Bad) \leq \Val_\M(s) < 1$.
\end{proof}

In the following, for the ease of notation,  for any state $s\in S$ and a strategy $\sigma$, we denote $\Pr_{\sigma,s}(\Cyl_{\sigma}(\rho))$ by $\PP_{\sigma,s}(\rho)$.
Recall that,  for every action $a\in\Opt_{\M}(s)$,
$\Val_{\M}(s) = \sum_{s'}P(s,a,s')\cdot \Val_{\M}(s')\,.$
We can then expand $\Val_{\M}(s)$ as:
$$\Val_{\M}(s) = \sum_a\sigma(s,a)\Val_{\M}(s) = \sum_a\sigma(s,a)\sum_{s'}P(s,a,s')\cdot \Val_{\M}(s')\,.$$
We can generalize the above statement by unfolding $\Val_\M(\cdot)$ for $n$ steps:

\begin{lemma}\label{lem:safety1}
	For every state $s\in S$ and for every strategy $\sigma\in\Sigma^{\Opt}_{\M}$,
	$$\Val_{\M}(s) = \sum_{\rho\in (VA)^{n}V}\PP_{\sigma,s}(\rho)\cdot \Val_{\M}(\last(\rho))+\sum_{\rho\in (VA)^{< n}\Good}\PP_{\sigma,s}(\rho)$$
\end{lemma}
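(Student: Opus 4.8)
The plan is to prove this by induction on $n$, unfolding $\Val_\M(\cdot)$ one step at a time. For the base case $n=0$, the first sum ranges over $\rho \in V$ (paths of length $0$ staying in $V$), so it contributes $\PP_{\sigma,s}(s) \cdot \Val_\M(s) = \Val_\M(s)$ when $s \in V$, and the second sum is empty; the remaining cases $s \in \Good$ and $s \in \Bad$ are handled by the degenerate conventions (if $s \in \Good$, the second sum contributes $\Val_\M(s) = 1$ by Lemma~\ref{lem:safetyGood1}, and if $s \in \Bad$ both sums are empty and $\Val_\M(s) = 0$). So the base case essentially says $\Val_\M(s)$ splits according to whether $s$ is already safe-for-sure, already dead, or still undetermined.

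For the inductive step, assume the statement holds for $n$. Take a path $\rho \in (VA)^n V$ with $\last(\rho) = s' \in V$. Since $s' \in V \subseteq S \setminus \Bad$ and $\sigma \in \Sigma^{\Opt}_\M$, every action $a \in \Supp(\sigma(\rho))$ lies in $\Opt_\M(s')$, so $\Val_\M(s') = \sum_a \sigma(\rho,a) \sum_{s''} P(s',a,s'') \Val_\M(s'')$, using the expansion displayed just before the lemma statement applied path-wise (i.e., replacing $\sigma(s',a)$ by $\sigma(\rho,a)$, which is legitimate because $\sigma \in \Sigma^{\Opt}_\M$ only requires the support condition). Substituting this into $\PP_{\sigma,s}(\rho) \cdot \Val_\M(s')$ and using $\PP_{\sigma,s}(\rho \cdot a s'') = \PP_{\sigma,s}(\rho) \cdot \sigma(\rho,a) \cdot P(s',a,s'')$, each term $\PP_{\sigma,s}(\rho) \Val_\M(s')$ becomes $\sum_{a,s''} \PP_{\sigma,s}(\rho \cdot a s'') \Val_\M(s'')$. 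Now split the successor states $s''$ into those in $V$, those in $\Good$, and those in $\Bad$: the $V$-successors feed into the degree-$(n+1)$ first sum; the $\Good$-successors produce new terms $\PP_{\sigma,s}(\rho \cdot a s'')$ with $\rho \cdot a s'' \in (VA)^{n+1}\Good$, extending the second sum from $(VA)^{\le n}\Good$ to $(VA)^{\le n+1}\Good$; and the $\Bad$-successors contribute $\Val_\M(s'') = 0$ and vanish. The terms of the second sum already present at level $n$ (paths ending in $\Good$ with fewer than $n$ $V$-steps) are unchanged, so collecting everything gives exactly the claimed identity for $n+1$.

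The main obstacle, and the place to be careful, is the bookkeeping of the index sets $(VA)^nV$ versus $(VA)^{<n}\Good$ and making sure no path is double-counted or dropped when a path in $V^*$ first leaves $V$ — precisely, that every path contributing at level $n$ either continues within $V$ (staying in the first sum) or exits to $\Good$ exactly once (entering the second sum and staying there forever after, since $\Good$ states are absorbing-in-$\Good$ by the argument in Lemma~\ref{lem:safetyGood2}) or exits to $\Bad$ and dies. One should also note explicitly that $\last(\rho) \in V$ for every $\rho \in (VA)^nV$ guarantees $\last(\rho) \notin \Bad$, which is what licenses the use of the $\Sigma^{\Opt}_\M$ expansion at each step (an analogous point to why Lemma~\ref{lemma:case1Safety} is restricted to $S \setminus \Bad$). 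Everything else is a routine rearrangement of finite and absolutely convergent sums of nonnegative terms.
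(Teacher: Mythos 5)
Your proof is correct and uses essentially the same argument as the paper: an induction that unfolds the optimality equation $\Val_\M(s)=\sum_a\sigma(\rho,a)\sum_{s'}P(s,a,s')\cdot\Val_\M(s')$ one step at a time (licensed by $\sigma\in\Sigma^{\Opt}_{\M}$) and sorts successors into $V$, $\Good$ (where $\Val_\M=1$) and $\Bad$ (where $\Val_\M=0$); the paper merely packages this as a backward induction on the position $k$ along a path (Lemma~\ref{lem:safety1-a}), i.e.\ expanding at the front rather than at the leaves as you do. The only caveat is the index set of the second sum: your bookkeeping yields $\bigcup_{i\le n}(VA)^i\Good$ rather than the stated $(VA)^{<n}\Good$ (which is why your $n=0$ base case for $s\in\Good$ needs that reading to come out to $1$), but the paper's own base case $k=n-1$, which produces $VA\Good$, has the same off-by-one, and the discrepancy is immaterial for the use of the lemma in Lemma~\ref{lem:safetyOpt}.
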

The summation in the first term of the above expression is taken over all paths that reach neither $\Good$ nor $\Bad$ within $n$ steps, whereas the summation in the second term  is over all paths that reach some state in $\Good$ within $n$ steps.
The result in Lemma~\ref{lem:safety1} follows from the following result:

\begin{lemma}\label{lem:safety1-a}
	For every finite path $\rho = s_0a_0s_1\ldots s_n$ of length $n$ and for every strategy $\sigma$ in $\Sigma^{\Opt}_{\M}$, for all $k< n$:
	\small
	\[
 \Val_{\M}(s_k) = \sum_{\rho'\in (VA)^{n-k}V}\PP_{\sigma_{\rho_{|_k}},s_k}(\rho')\cdot \Val_{\M}(\last(\rho'))+\sum_{\rho'\in (VA)^{< n-k}\Good}\PP_{\sigma_{\rho_{|_k}},s_k}(\rho')\,.
 \]
	\normalsize
\end{lemma}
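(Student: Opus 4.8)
The plan is to prove Lemma~\ref{lem:safety1-a} by downward induction on $k$, starting from $k = n-1$ and working towards $k = 0$; Lemma~\ref{lem:safety1} then follows immediately by taking $k = 0$, since $\rho_{|_0} = s_0$ and $\sigma_{\rho_{|_0}} = \sigma$.

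For the base case $k = n-1$, I would unfold $\Val_\M(s_{n-1})$ by one step using the identity recalled just before Lemma~\ref{lem:safety1}: since $\sigma \in \Sigma^{\Opt}_\M$, every action $a$ played by $\sigma_{\rho_{|_{n-1}}}$ from $s_{n-1}$ lies in $\Opt_\M(s_{n-1})$, so $\Val_\M(s_{n-1}) = \sum_a \sigma_{\rho_{|_{n-1}}}(s_{n-1},a) \sum_{s'} P(s_{n-1},a,s') \cdot \Val_\M(s')$. Now split the inner sum over successor states $s'$ according to whether $s' \in \Good$, $s' \in \Bad$, or $s' \in V$. The $\Bad$ terms contribute $0$ because $\Val_\M(s') = 0$ there (or simply because $\Val_\M(s') = \mathbf{1}[s' \in \Good]$ would overcount — more carefully, on $\Bad$ the value is $0$). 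The $\Good$ terms contribute, via Lemma~\ref{lem:safetyGood1}, exactly $\Val_\M(s') = 1$, and collecting them gives $\sum_{\rho' \in (VA)^{<1}\Good} \PP_{\sigma_{\rho_{|_{n-1}}},s_{n-1}}(\rho')$, i.e. the length-$1$ paths $s_{n-1} a s'$ with $s' \in \Good$, whose cylinder probability is $\sigma_{\rho_{|_{n-1}}}(s_{n-1},a)\cdot P(s_{n-1},a,s')$. The $V$ terms are exactly the length-$1$ paths ending in $V$, contributing $\sum_{\rho' \in (VA)^1 V} \PP_{\sigma_{\rho_{|_{n-1}}},s_{n-1}}(\rho') \cdot \Val_\M(\last(\rho'))$. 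This matches the claimed formula for $k = n-1$.

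For the inductive step, assume the statement holds for $k+1$; I would prove it for $k$. Again unfold $\Val_\M(s_k)$ by one step into $\sum_a \sigma_{\rho_{|_k}}(s_k,a)\sum_{s'} P(s_k,a,s')\Val_\M(s')$, split the successors $s'$ into $\Good$, $\Bad$, $V$ as above; the $\Bad$ part vanishes, the $\Good$ part yields the length-$1$ contribution to the second sum. For each successor $s' \in V$, the key move is to observe that $s'$ is the state $s_{k+1}'$ of a path prefix of the form $\rho'' = s_k a s'$, and that $\sigma_{\rho_{|_k} \cdot (a s')}$ coincides with $(\sigma_{\rho_{|_k}})_{s_k a s'}$; but since $s'$ is an arbitrary element of $V$ and not necessarily $s_{k+1}$ from the fixed path $\rho$, I should phrase the induction hypothesis uniformly over all paths and all strategies in $\Sigma^{\Opt}_\M$ (which it already is), so I can apply it to the strategy $\sigma_{\rho_{|_k}\cdot(as')}$ at "time $k+1$" along an appropriate extended path. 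Applying the induction hypothesis to $\Val_\M(s')$ expands it into $\sum_{\rho' \in (VA)^{n-k-1}V}(\cdots)\Val_\M(\last(\rho')) + \sum_{\rho' \in (VA)^{<n-k-1}\Good}(\cdots)$. Then I multiply through by $\sigma_{\rho_{|_k}}(s_k,a)P(s_k,a,s')$, which is precisely $\PP_{\sigma_{\rho_{|_k}},s_k}(s_k a s')$, and use the multiplicativity of cylinder probabilities: $\PP_{\sigma_{\rho_{|_k}},s_k}(s_k a s' \cdot \rho') = \PP_{\sigma_{\rho_{|_k}},s_k}(s_k a s')\cdot \PP_{\sigma_{\rho_{|_k}\cdot (as')},s'}(\rho')$. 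Summing over all $a$ and all $s' \in V$ re-indexes the two sums into sums over $(VA)^{n-k}V$ and $(VA)^{<n-k}\Good$ respectively (the $\Good$ terms from this step being the ones of length $\geq 2$, which together with the length-$1$ $\Good$ terms isolated earlier give all of $(VA)^{<n-k}\Good$), yielding exactly the desired formula.

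The main obstacle I anticipate is purely bookkeeping: getting the path-concatenation and strategy-shifting identities $\sigma_{\rho\cdot\rho'} = (\sigma_\rho)_{\rho'}$ and the cylinder-probability factorization stated cleanly, and making sure the index sets $(VA)^j V$ and $(VA)^{<j}\Good$ partition correctly when one step is peeled off — in particular that a path leaving $V$ to $\Good$ at step $j$ is counted once and only once, and that paths hitting $\Bad$ are correctly dropped (they carry $\Val = 0$). There is no real analytic difficulty since everything is a finite sum; the care is entirely in the combinatorial reorganization of the sums and in checking that the hypothesis "$\sigma \in \Sigma^{\Opt}_\M$" is preserved under the shift $\sigma \mapsto \sigma_{\rho'}$, which holds because $\Sigma^{\Opt}_\M$ is defined by a local condition on every path, hence closed under taking suffixes.
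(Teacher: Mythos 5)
Your proposal is correct and follows essentially the same route as the paper's own proof: a backward induction on $k$ with base case $k=n-1$, unfolding $\Val_\M(s_k)$ one step via the one-step optimality identity for $\sigma\in\Sigma^{\Opt}_\M$, splitting successors into $\Good$, $\Bad$, $V$ (using $\Val_\M=1$ on $\Good$ and $0$ on $\Bad$), applying the induction hypothesis to the $V$-successors with the shifted strategy, and re-indexing via multiplicativity of cylinder probabilities. The only cosmetic difference is that the paper also records the degenerate base cases $s_{n-1}\in\Good$ and $s_{n-1}\in\Bad$ as trivially true, which you omit but which cause no difficulty.
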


\noindent Using Lemma~\ref{lem:safety1-a}, we can now prove Lemma~\ref{lem:safety1}:
\begin{proof}[Proof of Lemma~\ref{lem:safety1}]
	Putting $k = 0$ in Lemma~\ref{lem:safety1-a}, we get:
	\small
	\[\Val_{\M}(s_0) = \sum_{\rho'\in (VA)^{n}V}\PP_{\sigma,s_0}(\rho')\cdot \Val_{\M}(\last(\rho'))+\sum_{\rho'\in (VA)^{< n}\Good}\PP_{\sigma,s_0}(\rho')\,.\qedhere\]
	\normalsize
\end{proof}
We now  characterize $\Pr{(\cdot)}$ in the same way as we did for $\Val(\cdot)$.
 Note that,
for any state $s$ and any strategy $\sigma$, we can expand $\Pr_{\sigma,s}(\Box \neg \Bad)$ as
$$\Pr_{\sigma,s}(\Box \neg \Bad) = \sum_a\sigma(s,a)\sum_{s'}P(s,a,s')\cdot \Pr_{\sigma_{sas'},s'}(\Box \neg \Bad)\,.$$
Analogous to Lemma~\ref{lem:safety1},  %
we can generalize this statement by unfolding $\Pr(\cdot)$ for $n$ steps:

\begin{lemma}\label{lem:safety2}
	For every state $s\in S$ and for every strategy $\sigma\in\Sigma^{\Opt}_{\M}$,
	\[\Pr_{\sigma,s}(\Box \neg \Bad) = \sum_{\rho\in (VA)^{n}V}\PP_{\sigma,s}(\rho)\cdot \Pr_{\sigma_{\rho},\last(\rho)}(\Box\neg\Bad)+\sum_{\rho\in (VA)^{< n}\Good}\PP_{\sigma,s}(\rho)\,.
 \]
\end{lemma}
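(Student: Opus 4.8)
The plan is to mimic the structure of the proof of Lemma~\ref{lem:safety1}: first establish a one-step-shifted recurrence for $\Pr_{\sigma_{\rho_{|_k}},s_k}(\Box\neg\Bad)$ analogous to Lemma~\ref{lem:safety1-a}, and then instantiate it at $k=0$. Concretely, I would prove by downward induction on $k$ (from $k=n$ down to $k=0$), or equivalently by induction on $n-k$, the claim that for every finite path $\rho=s_0a_0s_1\ldots s_n$ following $\sigma$ and every $k<n$,
\[
\Pr_{\sigma_{\rho_{|_k}},s_k}(\Box\neg\Bad) = \sum_{\rho'\in (VA)^{n-k}V}\PP_{\sigma_{\rho_{|_k}},s_k}(\rho')\cdot \Pr_{\sigma_{\rho_{|_k}\cdot\rho'},\last(\rho')}(\Box\neg\Bad)+\sum_{\rho'\in (VA)^{<n-k}\Good}\PP_{\sigma_{\rho_{|_k}},s_k}(\rho')\,.
\]
Setting $k=0$ then gives exactly the statement of Lemma~\ref{lem:safety2}, using $\sigma_{\varepsilon}=\sigma$.

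The base case is $n-k=1$: here the claimed identity is just the one-step expansion
\[
\Pr_{\sigma_{\rho_{|_k}},s_k}(\Box\neg\Bad) = \sum_a \sigma_{\rho_{|_k}}(s_k,a)\sum_{s'}P(s_k,a,s')\cdot \Pr_{\sigma_{\rho_{|_k}sas'},s'}(\Box\neg\Bad)
\]
displayed just before the lemma, after splitting the successors $s'$ according to whether $s'\in\Good$, $s'\in\Bad$, or $s'\in V$. For $s'\in\Bad$ the term $\Pr_{\cdot,s'}(\Box\neg\Bad)=0$ and drops out; for $s'\in\Good$ we have $\Pr_{\sigma_{\rho_{|_k}sas'},s'}(\Box\neg\Bad)=1$ by Lemma~\ref{lem:safetyGood2} (applied to the path $\rho_{|_k}\cdot a s'$, which still follows $\sigma$ since $\sigma\in\Sigma^{\Opt}_{\M}$), so these successors contribute $\sum_{\rho'\in(VA)^{0}\Good}\PP_{\sigma_{\rho_{|_k}},s_k}(\rho')$; the remaining $s'\in V$ give the sum over $(VA)^1 V$. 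The inductive step is a routine unfolding: apply the induction hypothesis (for $n-k-1$) to each path $\rho'\in(VA)^{n-k-1}V$, using the identity $\PP_{\sigma_{\rho_{|_k}},s_k}(\rho'\cdot a s'') = \PP_{\sigma_{\rho_{|_k}},s_k}(\rho')\cdot \sigma(\ldots)\cdot P(\last(\rho'),a,s'')$ to recombine the prefixes, and again splitting successors into $\Good/\Bad/V$ with the same three cases.

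The only mildly delicate point — and the one I would be most careful about — is bookkeeping with the conditioned sub-strategies $\sigma_{\rho_{|_k}\cdot\rho'}$ and the invariant that every relevant prefix still follows $\sigma$ and stays in $\Sigma^{\Opt}_{\M}$, so that Lemma~\ref{lem:safetyGood2} genuinely applies at the $\Good$-states encountered. Since $\sigma\in\Sigma^{\Opt}_{\M}$ plays only $\Opt_{\M}$-actions on \emph{all} paths, every restriction $\sigma_{\rho''}$ is again in $\Sigma^{\Opt}_{\M}$, so this invariant is automatic; there is no real obstacle, just notation. (The $\liminf$ plays no role here — $\Pr(\Box\neg\Bad)$ is a genuine probability, not a limit of averages — so the argument is purely measure-theoretic, identical in spirit to Lemma~\ref{lem:safety1}.)
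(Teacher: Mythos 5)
Your proposal matches the paper's own proof: the paper likewise proves an auxiliary statement (its Lemma~\ref{lem:safety2-a}) by backward induction on $k$, expanding $\Pr_{\sigma_{\rho_{|_k}},s_k}(\Box\neg\Bad)$ one step at a time, splitting successors into $V$, $\Good$, $\Bad$, invoking Lemma~\ref{lem:safetyGood2} at $\Good$-states, and then instantiating at $k=0$. Your handling of the conditioned sub-strategies and the closure of $\Sigma^{\Opt}_{\M}$ under path-restriction is exactly the bookkeeping the paper relies on, so the argument is correct and essentially identical.
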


\begin{lemma}\label{lem:safetyOpt}
	For every $s\in S$, and every $\sigma\in\Sigma^{\Opt}_{\M}$, $\Val_{\M}(s) = \Pr_{\sigma, s}(\Box \neg\Bad)$.
\end{lemma}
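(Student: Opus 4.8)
The plan is to establish the two inequalities $\Pr_{\sigma,s}(\Box\neg\Bad)\le\Val_{\M}(s)$ and $\Pr_{\sigma,s}(\Box\neg\Bad)\ge\Val_{\M}(s)$ for every $\sigma\in\Sigma^{\Opt}_{\M}$ and every state $s$. The first is immediate, since $\Val_{\M}(s)$ is by definition the supremum of $\Pr_{\tau,s}(\Box\neg\Bad)$ over all strategies $\tau$; so the content is the lower bound. I would first dispose of the two easy cases: if $s\in\Good$, then $\Val_{\M}(s)=1$ by Lemma~\ref{lem:safetyGood1} and $\Pr_{\sigma,s}(\Box\neg\Bad)=1$ by Lemma~\ref{lem:safetyGood2} applied to the length-$0$ path $\rho=s$; if $s\in\Bad$, then $s$ is a sink in $\Bad$ and both sides are $0$. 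It thus remains to treat $s\in V=S\setminus(\Good\cup\Bad)$.

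For $s\in V$ I would subtract the identity of Lemma~\ref{lem:safety1} from that of Lemma~\ref{lem:safety2}. The $\sum_{\rho\in(VA)^{<n}\Good}\PP_{\sigma,s}(\rho)$ terms appear identically in both and cancel, leaving, for every $n$,
\[
  \Val_{\M}(s)-\Pr_{\sigma,s}(\Box\neg\Bad)=\sum_{\rho\in(VA)^{n}V}\PP_{\sigma,s}(\rho)\bigl(\Val_{\M}(\last(\rho))-\Pr_{\sigma_{\rho},\last(\rho)}(\Box\neg\Bad)\bigr).
\]
Each bracketed factor lies in $[0,1]$: it is $\ge 0$ because $\Pr_{\sigma_{\rho},\last(\rho)}(\Box\neg\Bad)\le\Val_{\M}(\last(\rho))$ (a supremum), and $\le\Val_{\M}(\last(\rho))\le 1$. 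Hence $0\le\Val_{\M}(s)-\Pr_{\sigma,s}(\Box\neg\Bad)\le q_n$, where $q_n:=\sum_{\rho\in(VA)^{n}V}\PP_{\sigma,s}(\rho)$ is the probability, under $\sigma$ from $s$, of staying inside $V$ for the first $n$ steps. It therefore suffices to show $q_n\to 0$, i.e. that following $\sigma$ from $s$ one almost surely leaves $V$ after finitely many steps.

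To prove $q_n\to 0$ I would exploit the $\UPre$-structure of $V$. Since $V\subseteq\UPre^*(\Bad)$, every $s'\in V$ has a finite rank $r(s'):=\min\{i\mid s'\in\UPre^i(\Bad)\}\ge 1$, and $d:=\max_{s'\in V}r(s')$ is finite because $S$ is finite. By definition of $\UPre^i$, at a rank-$i$ state \emph{every} legal action $a$ satisfies $\Supp(P(s',a))\cap\UPre^{i-1}(\Bad)\neq\emptyset$; writing $p_{\min}$ for the smallest positive transition probability in $\M$, this shows that from any state of rank $i\ge 1$, under \emph{any} (history-dependent, randomized) strategy, the next state has rank $\le i-1$ --- and hence lies in $\Bad\cup V$, as $\Good\cap\UPre^*(\Bad)=\emptyset$ --- with probability at least $p_{\min}$. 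Chaining this $d$ times, from any $V$-state and under any strategy $\Bad$ is reached within $d$ steps with probability at least $p_{\min}^d>0$, so the probability of staying in $V$ throughout those $d$ steps is at most $1-p_{\min}^d$. Conditioning on the prefix of length $kd$ then gives $q_{(k+1)d}\le(1-p_{\min}^d)\,q_{kd}$, hence $q_{kd}\le(1-p_{\min}^d)^k\to 0$; since $q_n$ is nonincreasing in $n$, $q_n\to 0$, which closes the case $s\in V$ and the lemma.

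I expect this last step --- turning the purely combinatorial fact $V\subseteq\UPre^*(\Bad)$ into a uniform geometric decay bound for $q_n$ that holds against \emph{all} strategies, not merely memoryless ones --- to be the main obstacle; the rest is bookkeeping built on Lemmas~\ref{lem:safetyGood1}, \ref{lem:safetyGood2}, \ref{lem:safety1}, and~\ref{lem:safety2}.
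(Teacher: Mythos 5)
Your proof is correct and follows essentially the same route as the paper's: the same case split on $\Good$, $\Bad$, and $V$, the same subtraction of the identities in Lemmas~\ref{lem:safety1} and~\ref{lem:safety2}, and the same conclusion via $\lim_n\sum_{\rho\in(VA)^nV}\PP_{\sigma,s}(\rho)=0$. Your rank-based geometric-decay argument for that limit is in fact more detailed and more rigorous than the paper's one-line justification, which merely asserts the existence of a bounded-length path to $\Bad$.
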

\begin{proof}
	If $s\in \Good$, $\Val_{\M}(s) = \Pr_{\sigma, s}(\Box \neg\Bad) =1$.
	If $s\in \Bad$,  $\Val_{\M}(s) = \Pr_{\sigma, s}(\Box \neg\Bad) = 0$.
	Finally, if $s\in V$, from Lemma~\ref{lem:safety1} and Lemma~\ref{lem:safety2},
 \small
	\begin{align*}
		\Val_{\M}(s) - \Pr_{\sigma, s}(\Box \neg\Bad) &= \sum_{\rho\in (VA)^{n}V}\PP_{\sigma,s}(\rho)\cdot (\Val_{\M}(\last(\rho))-\Pr_{\sigma_\rho, \last(\rho)}(\Box \neg \Bad))\\
		&< \sum_{\rho\in (VA)^{n}V}\PP_{\sigma,s}(\rho)\, \text{ [Using Lemma~\ref{lem:safetyGood2}]}
	\end{align*}
 \normalsize
For $s\in \UPre^*(\Bad)$, there is a path of length at most $|V|$ reaching $\Bad$ in $\M_{\sigma}$. So $\lim_{n\to \infty}\sum_{\rho\in (VA)^{n}V}\PP_{\sigma,s}(\rho)=0$.
\end{proof}

Lemma~\ref{lem:localOptSafety-1} follows directly from Lemma~\ref{lem:safetyOpt}.
Then, using Lemma~\ref{lem:localOptSafety} and~\ref{lem:localOptSafety-1}, we conclude the following theorem:

\begin{theorem}\label{thm:safetyOpt1}
  For every state $s \in S$, $\Sigma_{\M,s}(\Box\neg\Bad)= \Sigma^{\Opt}_{\M}$.
\end{theorem}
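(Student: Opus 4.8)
The plan is to establish the set equality $\Sigma_{\M,s}(\Box\neg\Bad) = \Sigma^{\Opt}_{\M}$ by proving the two inclusions separately, both of which are already isolated as lemmas in the development above. The inclusion $\Sigma_{\M,s}(\Box\neg\Bad) \subseteq \Sigma^{\Opt}_{\M}$ is Lemma~\ref{lem:localOptSafety}: a safety-optimal strategy can only ever play, at a path ending in a state $t$, an action $a$ satisfying $\Val_{\M}(t) = \sum_{s'} P(t,a,s')\cdot\Val_{\M}(s')$, since by Lemma~\ref{lemma:case1Safety} any other action is strictly suboptimal and replacing $\sigma^*$ on the cylinder of that path by a locally optimal continuation would strictly increase $\Pr_{\sigma^*,s}(\Box\neg\Bad)$, contradicting optimality; this is the verbatim analogue of the proof of Lemma~\ref{lem:localOptCoverReach} with $\Diamond T$ replaced by $\Box\neg\Bad$. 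Hence the only thing left is the reverse inclusion $\Sigma^{\Opt}_{\M} \subseteq \Sigma_{\M,s}(\Box\neg\Bad)$, i.e.\ Lemma~\ref{lem:localOptSafety-1}.

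For the reverse inclusion I would simply invoke Lemma~\ref{lem:safetyOpt}: for every $\sigma\in\Sigma^{\Opt}_{\M}$ and every $s\in S$, $\Pr_{\sigma,s}(\Box\neg\Bad) = \Val_{\M}(s)$. Since $\Val_{\M}(s) = \max_\tau \Pr_{\tau,s}(\Box\neg\Bad)$ by definition, such a $\sigma$ attains the maximum safety probability from $s$, so $\sigma\in\argmax_\tau\Pr_{\tau,s}(\Box\neg\Bad) = \Sigma_{\M,s}(\Box\neg\Bad)$. Combining this with Lemma~\ref{lem:localOptSafety} yields the set equality for every $s$, which is exactly the statement of the theorem.

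Since the theorem itself is thus an immediate corollary, the substance sits in Lemma~\ref{lem:safetyOpt}, and I would argue it as follows (this is the only non-routine part). Partition $S$ into $\Good$, $\Bad$, and $V = S\setminus(\Good\cup\Bad)$. For $s\in\Good$ both quantities equal $1$ (Lemmas~\ref{lem:safetyGood1} and~\ref{lem:safetyGood2}), and for $s\in\Bad$ both equal $0$. For $s\in V$, apply the $n$-step unfoldings of $\Val_{\M}(\cdot)$ and of $\Pr_{\cdot}(\Box\neg\Bad)$ under an $\Opt$-strategy (Lemmas~\ref{lem:safety1} and~\ref{lem:safety2}); subtracting them, the ``reached $\Good$ within $n$ steps'' terms cancel exactly, leaving
\[
\Val_{\M}(s) - \Pr_{\sigma,s}(\Box\neg\Bad) = \sum_{\rho\in(VA)^nV}\PP_{\sigma,s}(\rho)\bigl(\Val_{\M}(\last(\rho)) - \Pr_{\sigma_\rho,\last(\rho)}(\Box\neg\Bad)\bigr).
\]
Each bracketed term lies in $[0,1]$ (indeed $<1$ by Lemma~\ref{lem:safetyGood2}, since $\last(\rho)\in V$ is not in $\Good$), so the left-hand side, which is nonnegative, is bounded above by $\sum_{\rho\in(VA)^nV}\PP_{\sigma,s}(\rho)$, the probability of staying inside $V$ for $n$ steps. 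Because $V\subseteq\UPre^*(\Bad)$, from every state of $V$ some path of positive probability under $\sigma$ reaches $\Bad$ within at most $|V|$ steps, which gives a geometric decay of this quantity to $0$ as $n\to\infty$, forcing $\Val_{\M}(s) = \Pr_{\sigma,s}(\Box\neg\Bad)$.

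I expect the main obstacle to be that last limit step, namely rigorously showing $\sum_{\rho\in(VA)^nV}\PP_{\sigma,s}(\rho)\to 0$: one must argue that, uniformly over strategies, the conditional probability of remaining in $V$ over each block of $|V|$ consecutive steps is bounded away from $1$ (using the smallest positive transition probability of $\M$ together with the definition of $\UPre^*(\Bad)$) and then iterate over blocks. Everything else --- the inductive proofs of the unfolding lemmas via Lemma~\ref{lem:safety1-a}, and the bookkeeping that the $\Good$-terms cancel between the two unfoldings --- is routine.
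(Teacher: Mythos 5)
Your proposal is correct and follows essentially the same route as the paper: the inclusion $\Sigma_{\M,s}(\Box\neg\Bad)\subseteq\Sigma^{\Opt}_{\M}$ via Lemma~\ref{lem:localOptSafety}, the reverse inclusion via Lemma~\ref{lem:safetyOpt}, and the proof of the latter by the $\Good$/$\Bad$/$V$ case split, subtraction of the two $n$-step unfoldings (Lemmas~\ref{lem:safety1} and~\ref{lem:safety2}) with cancellation of the $\Good$-terms, and the bound by the probability of remaining in $V$ for $n$ steps, which vanishes because every state of $V\subseteq\UPre^*(\Bad)$ reaches $\Bad$ with positive probability within $|V|$ steps under any strategy. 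The limit step you flag as the main obstacle is handled in the paper exactly by the block-of-$|V|$-steps argument you sketch.
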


\subsection{Maximizing Expected Conditional Mean Payoff}
We propose a simple two-step pruning
algorithm, similar to Algorithm~\ref{alg:distMin}, to solve the multi-objective problem defined by safety and mean-payoff. We first modify the given MDP $\M$ in the following manner.

\begin{definition}\label{def:mprime-from-m-safety}
	Let $S' = \{s\in S\mid \Val_{\M}(s)> 0\}$. We define $\M' = (S',A,P',R)$ where
	 $P'$ is defined as follows:
	\begin{align*}
		P'(s,a,s')=&\begin{cases}
			P(s,a,s')\cdot\frac{\Val_{\M}(s')}{\Val_{\M}(s)} &\text{if } (s,a) \in \Opt_{\M}\text{ and }s\in S'\\
			\bot &\text{otherwise.}
		\end{cases}
	\end{align*}
\end{definition}

Note that $\M'$ is again well-defined. We now present the two-step algorithm:

\begin{algorithm}[ht]
	\caption{}\label{alg:expMPMax}
	\begin{algorithmic}
		\Require $\M = (S,A,P,R)$, $s_0 \in S$, $\Bad\subseteq S$
	\end{algorithmic}
	\begin{algorithmic}[1]
		\State Create the MDP $\M' = (S',A,P',R')$ according to Definition~\ref{def:mprime-from-m-safety}.
		\State Find a strategy $\sigma^*$ that maximizes the expected mean payoff in $\M'$:
		\[\textstyle \sigma^* \in \argmax_\sigma \E'_{\sigma,s_0}(\MP)\,.\]\\
		\Return $\sigma^*$.
	\end{algorithmic}
\end{algorithm}

For a state $s_0$, a strategy $\sigma$, and a finite path $\rho = s_0a_0s_1\ldots s_n
\in (S'A)^*S'\cap \FPaths_{\M'}(s_0,\sigma)$, we define,
$\GoodCyl_{\sigma}(\rho) = \Cyl_{\sigma}(\rho)\cap \{\rho'\mid \rho' \models \Box\neg\Bad\}$. %
Then, using Lemma~\ref{lem:safetyOpt}, we get that if $\sigma \in \Sigma_{\M}^{\Opt}$, then
\small
\begin{equation}\label{eq:goodCyl}
\Pr_{\sigma,s_0}(\GoodCyl_{\sigma}(\rho)) = \Pr_{\sigma,s_0}(\Cyl_{\sigma}(\rho))\cdot \Pr_{\sigma_{\rho},s_n}(\Box\neg\Bad)
= \Pr_{\sigma,s_0}(\Cyl_{\sigma}(\rho))\cdot \Val_{\M}(s_n).\end{equation}
\normalsize

\begin{lemma}\label{lemma:cylRelationSafety}
	For every strategy
	$\sigma\in\Sigma^{\Opt}_{\M}$, $s_0\in S'$ and every finite path
	$\rho=s_0a_0s_1\ldots s_n \in (S'A)^*S'\cap \FPaths_{\M'}(s_0,\sigma)$,
	$\Pr'_{\sigma,s_0}(\Cyl'_{\sigma}(\rho)) = \frac{\Pr_{\sigma,s_0}(\GoodCyl_{\sigma}(\rho))}{\Val_{\M}(s_0)}\,.$
\end{lemma}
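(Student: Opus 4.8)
The plan is to mimic the proof of Lemma~\ref{lemma:cylRelation} from the reachability section, but with the key difference that here the path $\rho$ need not end in a sink (there is no analogue of ``$\last(\rho) \in T$ so $\Val_\M(\last(\rho)) = 1$''). So the telescoping product over transition probabilities will leave a residual factor of $\Val_\M(s_n)/\Val_\M(s_0)$ rather than collapsing to $1/\Val_\M(s_0)$, and this residual factor is exactly what Equation~\eqref{eq:goodCyl} is designed to absorb.

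Concretely, first I would write out $\Pr'_{\sigma,s_0}(\Cyl'_\sigma(\rho))$ as the product $\prod_{i=0}^{n-1}\sigma(\rho_{|_i},a_i)\cdot P'(s_i,a_i,s_{i+1})$. Since $\sigma \in \Sigma^{\Opt}_\M$, every pair $(s_i,a_i)$ lies in $\Opt_\M$, and since $\rho \in \FPaths_{\M'}(s_0,\sigma)$ every $s_i$ lies in $S'$ (so $\Val_\M(s_i) > 0$ and the fraction is well-defined); hence I may substitute $P'(s_i,a_i,s_{i+1}) = P(s_i,a_i,s_{i+1})\cdot \Val_\M(s_{i+1})/\Val_\M(s_i)$ for each $i$. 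The product of the $\Val_\M$ ratios telescopes to $\Val_\M(s_n)/\Val_\M(s_0)$, leaving
\[
\Pr'_{\sigma,s_0}(\Cyl'_\sigma(\rho)) \;=\; \Pr_{\sigma,s_0}(\Cyl_\sigma(\rho))\cdot \frac{\Val_\M(s_n)}{\Val_\M(s_0)}\,.
\]
Then I would invoke Equation~\eqref{eq:goodCyl}, which (using Lemma~\ref{lem:safetyOpt}, valid because $\sigma \in \Sigma^{\Opt}_\M$) states precisely that $\Pr_{\sigma,s_0}(\Cyl_\sigma(\rho))\cdot \Val_\M(s_n) = \Pr_{\sigma,s_0}(\GoodCyl_\sigma(\rho))$. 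Substituting this into the displayed equation yields $\Pr'_{\sigma,s_0}(\Cyl'_\sigma(\rho)) = \Pr_{\sigma,s_0}(\GoodCyl_\sigma(\rho))/\Val_\M(s_0)$, as claimed.

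The only real subtlety — and the step I would be most careful about — is checking that every factor in the telescoping product is legitimate: that $s_i \in S'$ for all $i$ (so no division by zero) and that $(s_i,a_i) \in \Opt_\M$ for all $i$ (so the first case of the definition of $P'$ applies). Both follow from the hypotheses $\rho \in (S'A)^*S' \cap \FPaths_{\M'}(s_0,\sigma)$ and $\sigma \in \Sigma^{\Opt}_\M$, essentially by unwinding the definition of $\FPaths_{\M'}$ and of $\Sigma^{\Opt}_\M$; I would state this explicitly before manipulating the product. Everything else is the same routine telescoping computation as in Lemma~\ref{lemma:cylRelation}, now closed off by~\eqref{eq:goodCyl} instead of by the sink property of $T$.
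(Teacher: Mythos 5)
Your proof is correct and is essentially identical to the paper's: the same telescoping product computation followed by an application of Equation~\eqref{eq:goodCyl} to convert $\Pr_{\sigma,s_0}(\Cyl_\sigma(\rho))\cdot\Val_\M(s_n)$ into $\Pr_{\sigma,s_0}(\GoodCyl_\sigma(\rho))$. Your explicit check that each factor is well-defined (every $s_i\in S'$ and $(s_i,a_i)\in\Opt_\M$) is a sensible addition that the paper leaves implicit.
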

\begin{proof}
    \begin{align*}
		\Pr'_{\sigma,s_0}(\Cyl'_{\sigma}(\rho))
		& = \prod_{i = 0}^{n-1} \sigma(\rho_{|i},a_i)\cdot P'(s_i,a_i,s_{i+1}) \\
		&= \prod_{i = 0}^{n-1} \sigma(\rho_{|i},a_i)\cdot P(s_i,a_i,s_{i+1})\cdot \frac{\Val_{\M}(s_{i+1})}{\Val_{\M}(s_i)}\\
		& = \Pr_{\sigma,s_0}(\Cyl_{\sigma}(\rho))\cdot \frac{\Val_{\M}(s_n)}{\Val_{\M}(s_0)}
        = \frac{\Pr_{\sigma,s_0}(\GoodCyl_{\sigma}(\rho))}{\Val_{\M}(s_0)} \quad[\text{from Eq.~\ref{eq:goodCyl}}]
        \qedhere
	\end{align*}
\end{proof}

We now show the following correlation between the expected mean-payoff in $\M'$ and the expected conditional mean-payoff in $\M$:

\begin{lemma}\label{lemma:expMaxMP}
	For every strategy $\sigma$, $\E'_{\sigma}(\MP) = {\E_{\sigma}(\MP\mid \Box\neg \Bad)}$
\end{lemma}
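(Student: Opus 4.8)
The plan is to relate, for each finite horizon $n$, the expected conditional finite-horizon reward $\E_{\sigma,s_0}(\Reward_n \mid \Box\neg\Bad)$ computed in $\M$ with the (unconditional) expected finite-horizon reward $\E'_{\sigma,s_0}(\Reward_n)$ computed in $\M'$, and then pass to the $\liminf_{n\to\infty}\frac{1}{n}(\cdot)$ limit on both sides. The key observation is that $\M'$ is only defined for $\sigma \in \Sigma^{\Opt}_{\M}$ (its set of strategies is exactly $\Sigma^{\Opt}_{\M}$), so one should first note that for $\sigma \notin \Sigma^{\Opt}_{\M}$ the statement is vacuous or handled separately; for $\sigma \in \Sigma^{\Opt}_{\M}$ we have by Theorem~\ref{thm:safetyOpt1} that $\sigma$ is safety-optimal, and by Lemma~\ref{lem:safetyOpt} that $\Pr_{\sigma,s}(\Box\neg\Bad) = \Val_{\M}(s)$ at every reachable state.

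First I would expand both sides as sums over finite paths of length $n$. Writing $R$ for the reward and using $R' = R$ on $S'$, we have $\E'_{\sigma,s_0}(\Reward_n) = \sum_{\rho} \Pr'_{\sigma,s_0}(\Cyl'_\sigma(\rho))\cdot \Reward_n(\rho)$ where $\rho$ ranges over length-$n$ paths in $\M'$ (equivalently, over safe length-$n$ paths staying in $S'$ that follow $\sigma$); note that in $\M'$ all paths stay in $S' = \{s \mid \Val_\M(s) > 0\}$ so no path ever hits $\Bad$. On the $\M$ side, $\E_{\sigma,s_0}(\Reward_n \mid \Box\neg\Bad) = \frac{1}{\Pr_{\sigma,s_0}(\Box\neg\Bad)} \sum_{\rho}\Pr_{\sigma,s_0}(\GoodCyl_\sigma(\rho))\cdot\Reward_n(\rho)$, again summing over length-$n$ $\sigma$-paths, where $\GoodCyl$ restricts the cylinder to the safe continuations. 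The denominator is $\Pr_{\sigma,s_0}(\Box\neg\Bad) = \Val_\M(s_0)$ by Lemma~\ref{lem:safetyOpt}. Now Lemma~\ref{lemma:cylRelationSafety} gives exactly $\Pr'_{\sigma,s_0}(\Cyl'_\sigma(\rho)) = \Pr_{\sigma,s_0}(\GoodCyl_\sigma(\rho)) / \Val_\M(s_0)$ for each such $\rho$, so term-by-term the two sums agree, yielding $\E'_{\sigma,s_0}(\Reward_n) = \E_{\sigma,s_0}(\Reward_n \mid \Box\neg\Bad)$ for every $n$. Dividing by $n$ and taking $\liminf_{n\to\infty}$ then gives $\E'_{\sigma}(\MP) = \E_{\sigma}(\MP \mid \Box\neg\Bad)$, as desired.

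The main obstacle is the careful bookkeeping on the path-set indices and on the conditioning. One must be sure the two collections of length-$n$ paths being summed over really coincide after pruning: paths $\rho$ of length $n$ in $\M'$ are precisely the length-$n$ $\sigma$-paths in $\M$ that (i) stay inside $S'$ and (ii) have a positive-probability safe continuation — and because every state of $S'$ has $\Val_\M > 0$ and $\sigma \in \Sigma^{\Opt}_\M$, Lemma~\ref{lem:safetyGood2}/Lemma~\ref{lem:safetyOpt} guarantee that such continuations exist and carry the right mass; meanwhile any length-$n$ $\sigma$-path in $\M$ that leaves $S'$ (or hits $\Bad$) contributes $0$ to $\GoodCyl$-weighted sums, so it may be dropped. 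A secondary subtlety is interchanging the (finite, for fixed $n$) sum over paths with the conditioning and making sure $\liminf$ behaves correctly under the term-wise equality — this is immediate since equality holds for each $n$ before taking any limit. Finally one should remark on well-definedness of the conditional quantity, which requires $\Val_\M(s_0) > 0$, i.e. $s_0 \in S'$; if $\Val_\M(s_0) = 0$ the conditional mean-payoff is undefined and the claim is understood to be restricted to $s_0 \in S'$ (as elsewhere in the section).
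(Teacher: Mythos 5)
Your proposal is correct and follows essentially the same route as the paper: both expand the finite-horizon expected rewards as path sums, apply Lemma~\ref{lemma:cylRelationSafety} together with $\Pr_{\sigma,s_0}(\Box\neg\Bad)=\Val_\M(s_0)$ to match the terms, observe that paths leaving $S'$ contribute zero mass to the $\GoodCyl$-weighted sums, and then divide by $n$ and take the $\liminf$. Your additional remarks on restricting to $\sigma\in\Sigma^{\Opt}_\M$ and on well-definedness when $\Val_\M(s_0)=0$ are sensible clarifications of hypotheses the paper leaves implicit, not a different argument.
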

\begin{proof}
For $r\in \R$, we define $\xi_r=\{\rho\in \FPaths_\M (s_0)\cap (SA)^nS \mid \Reward_n(\rho)=r\}$ and $\xi'_r=\{\rho\in \FPaths_{\M'} (s_0)\cap (S'A)^nS' \mid \Reward_n(\rho)=r\}$. Note that for a fixed $n$, there are finitely many such non-empty $\xi_r$.
From the definition of the conditional expected reward in $\M$, we get:
\begin{align*}
\E_{\sigma,s_0}(\Reward_n \mid \Box \neg \Bad)    &=
\sum_{r} {r \cdot \frac{\Pr_{\sigma,s_0}(\{\rho \mid \Reward_n(\rho) = r\} \cap \Box \neg \Bad)}{\Pr_{\sigma,s_0}(\Box \neg \Bad)}}\\
&= \sum_{r} {r \cdot \sum_{\rho \in \xi_r} {\frac{\Pr_{\sigma,s_0}(\Cyl_\sigma(\rho) \cap \Box \neg \Bad)}{\Val_\M(s_0)}}}\\
&= \sum_{r} {r \cdot \sum_{\rho \in \xi_r} {\frac{\Pr_{\sigma,s_0}(\GoodCyl_\sigma(\rho))}{\Val_\M(s_0)}}}\,.
\end{align*}
\begin{align}
\text{Then, }\E'_{\sigma,s_0}(\Reward_n)
 &=\sum_{r} {r \cdot \Pr'_{\sigma,s_0}(\{\rho \mid \Reward_n(\rho) = r\})}
    = \sum_{r} {r \cdot \sum_{\rho \in \xi'_r} {\Pr'_{\sigma,s_0}(\Cyl'_\sigma(\rho))}} \nonumber\\
    &= \sum_{r} {r \cdot \sum_{\rho \in \xi'_r} {\frac{\Pr_{\sigma,s_0}(\GoodCyl_\sigma(\rho))}{\Val_\M(s_0)}}} \nonumber\\
    &= \sum_{r} {r \cdot \sum_{\rho \in \xi_r} {\frac{\Pr_{\sigma,s_0}(\GoodCyl_\sigma(\rho))}{\Val_\M(s_0)}}}  \label{eq:expectation}\\
    &= {\E_{\sigma,s_0}(\Reward_n\mid \Box\neg\Bad)} \label{eq:expectation-2}
\end{align}

The equality in Eq.~\ref{eq:expectation} is due to the fact that for any finite path $\rho = s_0 \ldots s_n \in (SA)^nS\setminus (S'A)^nS'$, $\exists i$ s.t. $\Val_\M(s_i) = 0$, which implies,  $\Pr_{\sigma,s_0}(\GoodCyl_\sigma(\rho)) \leq \Pr_{\sigma,s_0}(\GoodCyl_\sigma(s_0\ldots s_i))= \Pr_{\sigma,s_0\ldots s_i}(\Box\neg\Bad) \leq \Val_\M(s_i) = 0$.

Finally, dividing by $n$ and taking limit on the both sides of Eq.~\ref{eq:expectation-2}, we get $\E'_{\sigma,s_0}(\MP) = {\E_{\sigma,s_0}(\MP\mid \Box\neg \Bad)}$.
\end{proof}

Now we prove the correctness of Algorithm~\ref{alg:expMPMax}:

\begin{theorem}
\label{thm:3}
	 Given an MDP $\M = (S, A, P, R)$, a state $s_0 \in S$ and $\Bad \subset S$,
 let $\sigma^*$ be the strategy returned by Algorithm~\ref{alg:expMPMax}.
 Then,
	\begin{enumerate}
		\item $\Pr_{\sigma^*,s_0}(\Box\neg\Bad) = \Val_{\M}(s_0)$.
		\item $\displaystyle \E_{\sigma^*,s_0}(\MP\mid\Box\neg\Bad) =
		\max_{\sigma\in \Sigma_{\M,s_0}(\Box\neg\Bad)}\E_{\sigma,s_0}(\MP\mid\Box\neg\Bad)$
	\end{enumerate}
\end{theorem}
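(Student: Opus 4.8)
The plan is to mirror the structure of the correctness proof for Algorithm~\ref{alg:distMin}, using the safety-specific results established earlier. First I would show part~(1): by Theorem~\ref{thm:safetyOpt1}, $\Sigma(\M') = \Sigma^{\Opt}_{\M} = \Sigma_{\M,s_0}(\Box\neg\Bad)$, so \emph{every} strategy $\sigma$ of $\M'$ — in particular the strategy $\sigma^*$ returned by the algorithm — is optimal for safety in $\M$, hence $\Pr_{\sigma^*,s_0}(\Box\neg\Bad) = \Val_{\M}(s_0)$. (One should note that $\sigma^*$, computed on $\M'$, is naturally viewed as a strategy on $\M$ via the identification $\Sigma(\M') = \Sigma^{\Opt}_{\M}$; this identification is immediate from Definition~\ref{def:mprime-from-m-safety}, since $P'(s,a,\cdot)$ is defined exactly when $(s,a)\in\Opt_{\M}$ and $s\in S'$, and $S'$ is reachable-closed under $\Opt_{\M}$-actions.)

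For part~(2), the key link is Lemma~\ref{lemma:expMaxMP}, which gives $\E'_{\sigma,s_0}(\MP) = \E_{\sigma,s_0}(\MP\mid\Box\neg\Bad)$ for \emph{every} strategy $\sigma$. Applying this to the argmax over $\Sigma(\M') = \Sigma_{\M,s_0}(\Box\neg\Bad)$, we get
\begin{align*}
\argmax_{\sigma\in\Sigma(\M')} \E'_{\sigma,s_0}(\MP)
&= \argmax_{\sigma\in\Sigma_{\M,s_0}(\Box\neg\Bad)} \E_{\sigma,s_0}(\MP\mid\Box\neg\Bad)\,,
\end{align*}
so $\sigma^*$, being a maximizer of the left-hand objective, is also a maximizer of the right-hand one. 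This yields
$\E_{\sigma^*,s_0}(\MP\mid\Box\neg\Bad) = \max_{\sigma\in\Sigma_{\M,s_0}(\Box\neg\Bad)}\E_{\sigma,s_0}(\MP\mid\Box\neg\Bad)$, which is exactly statement~(2).

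The main obstacle, compared to the reachability case, is handling the $\liminf$ in the definition of mean-payoff when passing from the finite-horizon identity (Eq.~\ref{eq:expectation-2}) to the limit: we must be sure that dividing by $n$ and taking $\liminf_{n\to\infty}$ commutes appropriately on both sides. But this is already discharged inside the proof of Lemma~\ref{lemma:expMaxMP} — since $\frac1n\E'_{\sigma,s_0}(\Reward_n) = \frac1n\E_{\sigma,s_0}(\Reward_n\mid\Box\neg\Bad)$ holds for every finite $n$, taking $\liminf$ of identical sequences gives identical limits. A secondary subtlety is that the argmax need not be a singleton and that an optimal mean-payoff strategy in $\M'$ exists at all; the latter follows from standard MDP theory (memoryless deterministic optimal strategies exist for mean-payoff on finite MDPs), so the algorithm's line~2 is well-defined. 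Beyond these points the proof is a routine assembly of the cited lemmas, essentially identical in shape to the proof of the reachability theorem.
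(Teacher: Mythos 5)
Your proof is correct and follows essentially the same route as the paper's: part~(1) via Theorem~\ref{thm:safetyOpt1} and the identification $\Sigma(\M')=\Sigma^{\Opt}_{\M}=\Sigma_{\M,s_0}(\Box\neg\Bad)$, and part~(2) by transporting the argmax through the equality of Lemma~\ref{lemma:expMaxMP}. The extra remarks on the $\liminf$ and on the existence of memoryless optimal mean-payoff strategies are sensible clarifications but do not change the argument.
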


\begin{proof}
    	From Theorem~\ref{thm:safetyOpt1}, for any $\sigma$ in $\Sigma^{\Opt}_{\M}$,  $\Pr_{\sigma,s_0}(\Box\neg\Bad) = \Val_{\M}(s_0)$. Note that a strategy in $\M'$ would be in $\Sigma^{\Opt}_{\M}$. Therefore,
$\Pr_{\sigma^*,s_0}(\Box\neg\Bad) = \Val_{\M}(s_0)$.

    From Lemma~\ref{lemma:expMaxMP}, we get  for any $\sigma$,
	\begin{align*}
 \E_{\sigma,s_0}(\MP\mid \Box\neg \Bad)
	&=\E'_{\sigma,s_0}(\MP)\\
 \Rightarrow\argmax_{\sigma\in \Sigma_{\M,s_0}(\Box\neg \Bad)} \E_{\sigma,s_0}(\MP\mid \Box\neg \Bad)
	&=\argmax_{\sigma\in \Sigma_{\M,s_0}(\Box\neg \Bad)} \E'_{\sigma,s_0}(\MP)%
\end{align*}
Hence,  $\sigma^*\in \argmax_{\sigma\in \Sigma_{\M,s_0}(\Box\neg \Bad)}\E_{\sigma,s_0}(\MP\mid \Box\neg \Bad)$ and therefore,
we conclude,  $\displaystyle \E_{\sigma^*,s_0}(\MP\mid\Box\neg\Bad) =
\max_{\sigma\in \Sigma_{\M,s_0}(\Box\neg\Bad)}\E_{\sigma,s_0}(\MP\mid\Box\neg\Bad)$.
\end{proof}
Note that, constructing the MDP $\M'$ (Line~1 of Algorithm~\ref{alg:expMPMax}) takes polynomial time. Finding a strategy that optimizes $\E'_{\sigma,s_0}(\MP)$ also takes polynomial time~\cite[Chapter 9]{DBLP:books/wi/Puterman94}. Therefore, the overall algorithm %
takes polynomial time.

\section{Discussion}
\label{sec:discussion}

The work presented in this article proposes a pruning-based approach
(Algorithms~\ref{alg:distMin},~\ref{alg:expMPMax}) that can be used to solve
certain multi-objective problems in MDPs. The algorithms work by first pruning
the given MDP based on the first objective, and then solving the (possibly
simplified) second objective on the pruned MDP. Note that, optimizing the second objective, in turn, optimizes  both of the objectives in the lexicographic order.

The case where the first objective is to maximize the probability of reaching
a set of (target) states in an MDP and the second objective is to minimize the
conditional expected time to reach the same set of states, has been
discussed in Section~\ref{sec:reachability-expected-length}. Note that one can
consider more general (positive) cost functions and try to minimize the
conditional expected cost to reach the target states as a secondary objective,
keeping the first objective unchanged.

Based on a suggestion by Jakob Piribauer, we conjecture that the second objective considered in this paper can, in fact, be replaced by any measurable function.
More precisely, when the first objective is to remain safe, our technique can be applied to solve the bi-objective problem where the second objective is to optimize the expected value of a measurable function $g$, conditioned on the event that safety is satisfied. To this end, we can prove the following result:
every strategy in the original MDP $\M$ that maximizes $\mathbb{E}(g \mid \Box \neg {\sf Bad})$ while maximizing the probability of staying safe, also maximizes the expected value of $g$ in the pruned MDP $\M'$, that is,
\[\sup_{\sigma \in \Sigma_{\sf Safe}^\M} \mathbb{E}_{\sigma}(g \mid \Box \neg {\sf Bad}) = \sup_{\sigma \in \Sigma(\M')} \mathbb{E}_{\sigma}'(g)\]
where $\Sigma_{\sf Safe}^\M$ denotes the set of all strategies that maximize the probability of staying safe in $\M$. We believe this result can be proved by generalizing the proof of Lemma~\ref{lemma:expMaxMP}.

Similarly, when the primary objective is to reach a set of target states with as high probability as possible, we believe our technique will be able to compute the optimal strategy when the secondary objective is given by any measurable function~$g$. We conjecture that the following result will hold:
any strategy in $\M$ that first maximizes the probability to reach a target and further maximizes the expected value of a measurable function $g$ conditioned on reaching a target state, will also maximize the (unconditional) expected value of $g$ in the pruned MDP $\M'$, among the strategies that reach a target almost surely, that is, with probability $1$.
More formally, we can obtain the following result:
\[\sup_{\sigma \in \Sigma_{\sf Reach}^\M} \mathbb{E}_{\sigma}(g \mid \Diamond T) = \sup_{\sigma \in \Sigma^{\M'}_{\sf a.s.Reach}} \mathbb{E}_{\sigma}'(g)\]
where $\Sigma^{\M'}_{\sf a.s.Reach}$ is the set of all strategies in $\M'$ that, when followed, forces $\M'$ to reach a target state with probability~$1$.
Further, if it is the case that every strategy in $\M'$ maximizing the (unconditional) expected value of $g$ reaches a target with probability $1$ (which was the case in the pair of objectives considered in Section~\ref{sec:reachability-expected-length}), then the problem reduces to finding a strategy in $\M'$ that  maximizes the (unconditional) expected value of $g$ among all strategies, that is,
\[\sup_{\sigma \in \Sigma_{\sf Reach}^\M} \mathbb{E}_{\sigma}(g \mid \Diamond T) = \sup_{\sigma \in \Sigma(\M')} \mathbb{E}_{\sigma}'(g)\,.\]

While we studied only two-dimensional lexicographic objectives for
the sake of clarity and simplicity, we note that our work can be
straight-forwardly extended to more than two reward structures. For example,
one may want to optimize for safety first, reachability second, and minimal
expected time to reach a target as a third objective. In this case, we would
proceed in three steps: a first pruning of the MDP that solves the
safety problem, a second pruning that over-approximate the winning strategies
for reachability, and finally we would minimize the expected distance.

\bibliographystyle{splncs04}
\bibliography{main}

\newpage
\appendix
\section{Missing proofs of Section~\ref{sec:safety-expected-mp}}
\subsection{Proof of Lemma~\ref{lemma:case1Safety}}
	Suppose, there is a state $s\in S\setminus \Bad$ and an action $a\in A$ such that $\Val_{\M}(s)< \sum_{s'}P(s,a,s')\cdot \Val_{\M}(s')$.
Now, consider the strategy $\sigma'$ that takes action $a$ from $s$ and then from paths $s\cdot as'$ follows a strategy $\sigma_{s'}\in \Sigma_{\M,s'}(\Box\neg\Bad)$ that maximizes the probability to avoid states in $\Bad$. Formally,
\begin{align*}
	\sigma'(\rho)=&\begin{cases}
		a &\text{if } \rho = s\\
		\sigma_{s'}(\rho') &\text{if } \rho = s\cdot as'\cdot \rho'%
	\end{cases}
\end{align*}
Then, we get the following: $\Pr_{\M_{\sigma'},s}(\Box\neg\Bad) =
\sum_{s'}P(s,a,s')\cdot \Pr_{{\sigma_{s'}},s'}(\Box \neg \Bad) =  \sum_{s'}P(s,a,s')\cdot \Val_{\M}(s') > \Val_{\M}(s)$, which is a contradiction.

\subsection{Proof of Lemma~\ref{lem:localOptSafety}}
    Suppose that there is a strategy $\sigma^*\in \Sigma_{\M,s}(\Box\neg \Bad)$ where there exists a path $\rho$ and an action $a\in \Supp(\sigma^*(\rho))$ such that $(\last(\rho),a)\not\in \Opt_{\M}$. Let $\last(\rho) = t$. Assume that $\rho\models\Box\neg\Bad$.
Then, from Lemma~\ref{lemma:case1Safety} and the fact that $(t,a)\not\in \Opt_{\M}$,
$$\Val_{\M}(\last(\rho))> \sum_{s'}P(t,a,s')\cdot \Val_{\M}(s')$$
and for every other action $a'\neq a$,
$$\Val_{\M}(t)\geq \sum_{s'}P(t,a',s')\cdot \Val_{\M}(s')\,.$$
Consider the strategy $\sigma''$ which differs from $\sigma^*$ only on paths with $\rho$ as prefix: on every path having $\rho$ as a prefix, $\sigma''$ takes the next action according to a strategy $\sigma_{t}\in \Sigma_{\M,t}(\Box\neg\Bad)$ that maximizes the probability to avoid states in $\Bad$ from $\last(\rho)$, whereas, it takes action according to $\sigma^*$ on every other path. Formally,
\begin{align*}
	\sigma''(\rho')=&\begin{cases}
		\sigma_{t}(\rho'') &\text{if } \rho' = \rho\cdot \rho''\\
		\sigma^*(\rho') &\text{otherwise.}
	\end{cases}
\end{align*}

Note that, for every strategy $\sigma$,
$$\Pr_{\M_{\sigma},\rho}(\Box\neg\Bad)
= \sum_{a'}\bigg(\sigma(\rho,a')
\cdot \sum_{s'}\left(P(t, a',s')\cdot \Pr_{\M_{\sigma},\rho\cdot a's'}(\Box\neg\Bad)\right)\bigg)\,.$$
Also, $\Pr_{\M_{\sigma^*},\rho\cdot a's'}(\Box\neg\Bad) \leq \Val_{\M}(s')$ for all $a' \in A$. Therefore,
\begin{align*}
	\Pr_{\M_{\sigma^*},\rho}(\Box\neg\Bad)
	&= \sum_{a'}\bigg(\sigma^*(\rho,a')
	\cdot \sum_{s'}\left(P(t, a',s')\cdot \Pr_{\M_{\sigma^*},\rho\cdot a's'}(\Box\neg\Bad)\right)\bigg)\\
	&\leq \sum_{a'}\bigg(\sigma^*(\rho,a')
	\cdot \sum_{s'}\left(P(t, a',s')\cdot \Val_{\M}(s')\right)\bigg)\\
	&< \sum_{a'}\sigma^*(\rho,a')\cdot
	\Val_{\M}(t)\\
	&=\Val_{\M}(t)
\end{align*}
So, $\Pr_{\M_{\sigma''},\rho}(\Box\neg\Bad) = \Pr_{\M_{\sigma_{t}},t}(\Box\neg\Bad) = \Val_{\M}(t, T) > \Pr_{\M_{\sigma^*},\rho}(\Box\neg\Bad)$.
Now note that, for any strategy $\sigma$,
\begin{align*}
	\Pr_{\M_{\sigma},s}(\Box\neg\Bad) &= \Pr_{\M_{\sigma},s}(\rho'\models\Box\neg\Bad\wedge \rho\sqsubseteq \rho') + \Pr_{\M_{\sigma},s}(\rho'\models\Box\neg\Bad\wedge \rho\not\sqsubseteq \rho')\\
	&= \Pr_{\M_{\sigma},s}(\cyl_{\sigma}(\rho))\cdot \Pr_{\M_{\sigma},\rho}(\Box\neg\Bad) + \Pr_{\M_{\sigma},s}(\rho'\models\Box\neg\Bad\wedge \rho\not\sqsubseteq \rho')
\end{align*}
As $\sigma^*(\rho') = \sigma''(\rho')$ for any $\rho'$ such that $\rho \sqsubseteq \rho'$, $$\Pr_{\M_{\sigma^*},s}((\cyl_{\sigma^*}(\rho)))=\Pr_{\M_{\sigma''},s}((\cyl_{\sigma''}(\rho)))\,.$$
and
$$\Pr_{\M_{\sigma^*},s}(\rho'\models\Box\neg\Bad\wedge \rho\not\sqsubseteq \rho')=\Pr_{\M_{\sigma''},s}(\rho'\models\Box\neg\Bad\wedge \rho\not\sqsubseteq \rho')\,.$$
Then $\Pr_{\M_{\sigma^*},s}(\Box\neg\Bad)<\Pr_{\M_{\sigma''},s}(\Box\neg\Bad)$, which cannot be true as $\sigma^*$ is an optimal strategy.
Therefore, $\Sigma_{\M,s}(\Box\neg\Bad)\subseteq\Sigma^{\Opt}_{\M}$.

\subsection{Proof of Lemma~\ref{lem:safety1-a}}

	We prove this by backward induction on $k$.

 \noindent
 Base case:  $k = n -1$. If $s_{n-1} \in \Good$, then $\Val_{\M}(s_{n-1}) = 1$. If $s_{n-1} \in \Bad$, then $\Val_{\M}(s_{n-1}) = 0$. In both cases, the statement is trivially true. If $s_{n-1} \in V$,
	\small
	\begin{align*}
		\Val_{\M}(s_{n-1}) &= \sum_{a_{n-1}} \sigma(\rho_{|_{n-1}},a_{n-1})\sum_{s'_n}P(s_{n-1},a_{n-1},s'_n)\cdot \Val_{\M}(s'_n)\\
		&= \sum_{\rho'\in \{s_{n-1}\}AS}\PP_{\sigma_{\rho_{|_{n-1}}},s_{n-1}}(\rho')\cdot \Val_{\M}(\last(\rho'))\\
		&= \sum_{\rho'\in \{s_{n-1}\}AV}\PP_{\sigma_{\rho_{|_{n-1}}},s_{n-1}}(\rho')\cdot\Val_{\M}(\last(\rho')) + \\
        &\quad \sum_{\rho'\in \{s_{n-1}\}A\Good}\PP_{\sigma_{\rho_{|_{n-1}}},s_{n-1}}(\rho')\times 1  +
		\sum_{\rho'\in \{s_{n-1}\}A\Bad}\PP_{\sigma_{\rho_{|_{n-1}}},s_{n-1}}(\rho') \times 0\\
		&= \sum_{\rho'\in VAV}\PP_{\sigma_{\rho_{|_{n-1}}},s_{n-1}}(\rho')\cdot\Val_{\M}(\last(\rho'))+\sum_{\rho'\in VA\Good}\PP_{\sigma_{\rho_{|_{n-1}}},s_{n-1}}(\rho')
	\end{align*}
	\normalsize
	Suppose the statement is true for $k+1$. Then,
	\small
	\begin{align*}
		\Val_{\M}(s_k) &= \sum_{a_k} \sigma(\rho_{|_{k}},a_{k})\sum_{s'_{k+1}}P(s_{k},a_k,s'_{k+1}) \Val_{\M}(s'_{k+1})\\
		&= \sum_{a_k} \sigma(\rho_{|_k},a_k)\left(\sum_{s'_{k+1}\in V}P(s_k,a_k,s'_{k+1}) \Val_{\M}(s'_{k+1})+
		\right.\\
		& \left. \sum_{s'_{k+1}\in \Good}P(s_k,a_k,s'_{k+1}) \Val_{\M}(s'_{k+1})+\sum_{s'_{k+1}\in \Bad}P(s_k,a_k,s'_{k+1}) \Val_{\M}(s'_{k+1})\right)\\
		&= \sum_{a_k} \sigma(\rho_{|_k},a_k)\left(\sum_{s'_{k+1}\in V}P(s_k,a_k,s'_{k+1}) \Val_{\M}(s'_{k+1})+
		\sum_{s'_{k+1}\in \Good}P(s_k,a_k,s'_{k+1})\right)\\
		&= \sum_{a_k} \sigma(\rho_{|_k},a_k)\sum_{s'_{k+1}\in V}P(s_k,a_k,s'_{k+1})
		\sum_{\rho'\in (\{s'_{k+1}\}\cdot A)^{n-k-1}\cdot V}
		\PP_{\sigma_{\rho_{|_{k+1}}},s'_{k+1}}(\rho')\cdot \Val_{\M}(\last(\rho'))\\
		&+\sum_{a_k} \sigma(\rho_{|_k},a_k)\sum_{s'_{k+1}\in V}P(s_k,a_k,s'_{k+1})\sum_{\rho'\in (\{s'_{k+1}\}\cdot A)^{\leq n-k-1}\cdot \Good}\PP_{\sigma_{\rho_{|_{k+1}}},s'_{k+1}}(\rho')\\
		&+
		\sum_{a_k} \sigma(\rho_{|_k},a_k)\sum_{s'_{k+1}\in \Good}P(s_k,a_k,s'_{k+1})\\
		&=\sum_{\rho'\in (VA)^{n-k}V}\PP_{\sigma_{\rho_{|_{k}}},s_{k}}(\rho')\cdot \Val_{\M}(\last(\rho'))+\sum_{\rho'\in (VA)^{< n-k}\Good}\PP_{\sigma_{\rho_{|_{k}}},s_{k}}(\rho')\,.
		\qedhere
	\end{align*}

\subsection{Proof of Lemma~\ref{lem:safety2}}

Here, for the ease of notation, we will use $\Val_{\sigma}(s)$ to denote $\Pr_{\sigma,s}(\Box \neg \Bad)$.

\begin{lemma}\label{lem:safety2-a}
	Let $\rho = s_0a_0s_1\ldots s_n$ be a finite path of length $n$. Let $\sigma$ be a strategy in $\Sigma^{\Opt}_{\M}$. Then for all $k< n$:
	\small
	$$\Val_{\sigma_{p_{|_k}}}(s_k) = \sum_{\rho'\in (VA)^{n-k}V}\PP_{\sigma_{p_{|_k},s_k}}(\rho')\cdot \Val_{\sigma_{p_{|_k}\cdot \rho'}}(\last(\rho'))
	+\sum_{\rho'\in (VA)^{< n-k}\Good}\PP_{\sigma_{p_{|_k},s_k}}(\rho')$$
	\normalsize
\end{lemma}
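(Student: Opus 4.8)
The plan is to prove Lemma~\ref{lem:safety2-a} by the very same backward induction on $k$ that was used for Lemma~\ref{lem:safety1-a}, with the strategy-independent value $\Val_{\M}(\cdot)$ replaced throughout by the strategy-dependent value $\Val_\sigma(\cdot)=\Pr_{\sigma,\cdot}(\Box\neg\Bad)$. The recursion that drives the induction is the one-step law of total probability recalled just before Lemma~\ref{lem:safety2}, namely
\[
\Val_\tau(s)=\sum_{a}\tau(s,a)\sum_{s'}P(s,a,s')\cdot\Val_{\tau_{sas'}}(s')\,,
\]
applied with $\tau=\sigma_{\rho_{|_k}}$ and $s=s_k$. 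Where the proof of Lemma~\ref{lem:safety1-a} used $\Val_{\M}(s)=0$ for $s\in\Bad$ and $\Val_{\M}(s)=1$ for $s\in\Good$, here I would instead use (i) that $\Bad$ states are sinks, so $\Val_{\tau}(s)=0$ for every $s\in\Bad$ and every strategy $\tau$, and (ii) Lemma~\ref{lem:safetyGood2}, which gives $\Val_{\tau_\rho}(\last(\rho))=1$ whenever $\last(\rho)\in\Good$ and $\tau\in\Sigma^{\Opt}_{\M}$. For (ii) to be applicable to all the substrategies that will appear, I would first note the immediate fact that $\Sigma^{\Opt}_{\M}$ is closed under $\tau\mapsto\tau_\rho$, since every action in a support of $\tau_\rho$ is an action in a support of $\tau$.

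Concretely, I would phrase the induction on the remaining horizon $m=n-k\ge 1$, universally quantified over the current state and strategy: for all $s\in S$, all $\tau\in\Sigma^{\Opt}_{\M}$ and all $m\ge 1$,
\[
\Val_\tau(s)=\sum_{\rho'\in (VA)^{m}V}\PP_{\tau,s}(\rho')\cdot\Val_{\tau_{\rho'}}(\last(\rho'))+\sum_{\rho'\in (VA)^{<m}\Good}\PP_{\tau,s}(\rho')\,,
\]
from which Lemma~\ref{lem:safety2-a} follows by taking $\tau=\sigma_{\rho_{|_k}}$, $s=s_k$, $m=n-k$. In the base case $m=1$ one applies the one-step recursion and splits the inner sum over $s'\in V$, $s'\in\Good$, $s'\in\Bad$: the $\Bad$-summands vanish by (i), the $\Good$-summands collapse to $\PP_{\tau,s}(\cdot)$ by (ii), and the $V$-summands are exactly the length-one paths feeding the first sum. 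In the inductive step one performs the same one-step unfolding and case split, drops the $\Bad$-part, keeps the $\Good$-part as the length-one contribution to the second sum, and on each $s'\in V$ invokes the induction hypothesis at horizon $m-1$ with the shifted strategy $\tau_{sas'}$; one then collapses the nested sums $\sum_a\tau(s,a)\sum_{s'\in V}P(s,a,s')\sum_{\rho''}\PP_{\tau_{sas'},s'}(\rho'')(\cdots)$ back into single sums over $\rho'\in (VA)^{m}V$ and $\rho'\in (VA)^{<m}\Good$, using $\PP_{\tau,s}(sas'\cdot\rho'')=\tau(s,a)\,P(s,a,s')\,\PP_{\tau_{sas'},s'}(\rho'')$. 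This is precisely the block-counting bookkeeping already carried out in the proof of Lemma~\ref{lem:safety1-a}, only with $\Val_\M$ replaced by $\Val_\tau$.

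The one genuine subtlety — and my reason for quantifying the induction over an arbitrary current state and strategy rather than over a single fixed path as in the statement — is that the induction hypothesis must be used not merely for the successor $s_{k+1}$ appearing in the given path $\rho$, but for every $V$-successor $s'$ of $s_k$ together with the correspondingly shifted substrategy $\sigma_{\rho_{|_k}\cdot a s'}$; the horizon formulation makes this re-indexing automatic. There are no convergence concerns, since everything here stays at a fixed finite horizon — the passage to the limit happens only later, in the proof of Lemma~\ref{lem:safetyOpt}.
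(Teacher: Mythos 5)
Your proposal is correct and follows essentially the same route as the paper: a one-step unfolding of $\Pr_{\cdot}(\Box\neg\Bad)$, a case split of successors into $V$, $\Good$, $\Bad$ using Lemma~\ref{lem:safetyGood2} for the $\Good$ part, and the same collapsing of nested sums, with your induction on the remaining horizon $m=n-k$ being just a reindexing of the paper's backward induction on $k$. Your explicit universal quantification over the current state and shifted strategy (together with the observation that $\Sigma^{\Opt}_{\M}$ is closed under shifting) is a welcome tightening of a point the paper leaves implicit when it applies the induction hypothesis to every $V$-successor rather than only the one on the fixed path.
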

\begin{proof}
	We prove this by backward induction on $k$.
	For $k = n -1$. If $s_{n-1} \in \Good$, then $\Val_{\M}(s_{n-1}) = 1$. If $s_{n-1} \in \Bad$, then $\Val_{\M}(s_{n-1}) = 0$. In both cases, the statement is trivially true. If $s_{n-1} \in V$,
	\small
	\begin{align*}
		\Val_{\sigma_{p_{|_{n-1}}}}(s_{n-1}) &= \sum_{a'_{n-1}} \sigma(p_{|_{n-1}},a'_{n-1})\sum_{s'_n}P(s_{n-1},a'_{n-1},s'_n) \Val_{\sigma_{p_{|_{n-1}}\cdot a'_{n-1}s'_{n-1}}}(s'_n)\\
		&= \sum_{\rho'\in \{s_{n-1}\}AS}\PP_{\sigma_{p_{|_{n-1}}},s_{n-1}}(\rho')\cdot \Val_{\sigma_{p_{|_{n-1}}\cdot \rho'}}(\last(\rho'))\\
		&= \sum_{\rho'\in \{s_{n-1}\}AV}\PP_{\sigma_{p_{|_{n-1}}},s_{n-1}}(\rho')\cdot\Val_{\sigma_{p_{|_{n-1}}\cdot \rho'}}(\last(\rho'))\\
        &\quad \quad \quad \quad +\sum_{\rho'\in \{s_{n-1}\}A\Good}\PP_{\sigma_{p_{|_{n-1}}},s_{n-1}}(\rho')\times 1 \\
		&\quad \quad \quad \quad +
		\sum_{\rho'\in \{s_{n-1}\}A\Bad}\PP_{\sigma_{p_{|_{n-1}}},s_{n-1}}(\rho') \times 0\\
		&= \sum_{\rho'\in VAV}\PP_{\sigma_{p_{|_{n-1}}},s_{n-1}}(\rho')\cdot\Val_{\sigma_{p_{|_{n-1}}\cdot \rho'}}(\last(\rho'))+\sum_{\rho'\in VA\Good}\PP_{\sigma_{p_{|_{n-1}}},s_{n-1}}(\rho')
	\end{align*}
	\normalsize
	Suppose the statement is true for $k+1$. Then, using Lemma~\ref{lem:safetyGood2},
	\small
	\begin{align*}
		\Val_{\sigma_{p_{|_k}}}(s_k) &= \sum_{a'_k} \sigma(p_{|_{k}},a'_{k})\sum_{s'_{k+1}}P(s_{k},a'_k,s'_{k+1}) \Val_{\sigma_{p_{|_k}\cdot a'_{k}s'_{k+1}}}(s'_{k+1})\\
		&= \sum_{a_k} \sigma(p_{|_k},a_k)\left(
		\sum_{s'_{k+1}\in V}P(s_{k},a'_k,s'_{k+1}) \Val_{\sigma_{p_{|_k}\cdot a'_{k}s'_{k+1}}}(s'_{k+1})\right.\\
		&\quad \quad \quad \quad +\left.
		\sum_{s'_{k+1}\in\Good}P(s_{k},a'_k,s'_{k+1}) \Val_{\sigma_{p_{|_k}\cdot a'_{k}s'_{k+1}}}(s'_{k+1})\right.\\
		&\quad \quad \quad \quad +\left.
		\sum_{s'_{k+1}\in\Bad}P(s_{k},a'_k,s'_{k+1}) \Val_{\sigma_{p_{|_k}\cdot a'_{k}s'_{k+1}}}(s'_{k+1})
		\right)\\
		&= \sum_{a_k} \sigma(p_{|_k},a_k)\left(
		\sum_{s'_{k+1}\in V}P(s_{k},a'_k,s'_{k+1}) \Val_{\sigma_{p_{|_k}\cdot a'_{k}s'_{k+1}}}(s'_{k+1})\right.\\
		&\quad + \left. \sum_{s'_{k+1}\in\Good}P(s_{k},a'_k,s'_{k+1})\right)\\
		&= \sum_{a_k} \sigma(p_{|_k},a_k)
		\sum_{s'_{k+1}\in V}P(s_{k},a'_k,s'_{k+1})\times\\
        &
		\quad \quad\left(
		\sum_{\rho'\in (VA)^{n-k-1}V}\PP_{\sigma_{p_{|_k}\cdot a'_{k}s'_{k+1},s_{k+1}}}(\rho')\cdot \Val_{\sigma_{p_{|_k}\cdot a'_{k}s'_{k+1}\cdot \rho'}}(\last(\rho')) \right.\\
        &\quad\quad\quad\left. + \sum_{\rho'\in (VA)^{< n-k}\Good}\PP_{\sigma_{p_{|_k},s_k}}(\rho')
		\right)\\
		&\quad \quad  +\sum_{a_k} \sigma(p_{|_k},a_k)\sum_{s'_{k+1}\in\Good}P(s_{k},a'_k,s'_{k+1})\\
		&= \sum_{\rho'\in (VA)^{n-k}V}\PP_{\sigma_{p_{|_k},s_k}}(\rho')\cdot \Val_{\sigma_{p_{|_k}\cdot \rho'}}(\last(\rho'))\\
		&\quad\quad +\sum_{\rho'\in (VA)^{< n-k}\Good}\PP_{\sigma_{p_{|_k},s_k}}(\rho')\qedhere
	\end{align*}
	\normalsize
\end{proof}
\begin{proof}[Proof of Lemma~\ref{lem:safety2}]
	For $k = 0$ in Lemma~\ref{lem:safety2-a}, we get:
	\small
	$$\Val_{\sigma}(s_0) = \sum_{p'\in (VA)^{n}V}\PP_{\sigma,s_0}(p')\cdot \Val_{\sigma}(\last(p'))+\sum_{p'\in (VA)^{< n}\Good}\PP_{\sigma,s_0}(p')\,.\qedhere$$
	\normalsize
\end{proof}

\end{document}